\newenvironment{customthm}[1]
  {\innercustomthm}
  {\endinnercustomthm}
\newtheorem{lemma}{Lemma}
\newtheorem{Proposition}{Proposition}
\newtheorem{remark}{Remark}
\newcommand{\id}{\mathds{1}}
\newcommand{\mean}[1]{\langle #1 \rangle}
\newcommand{\kb}[1]{ | #1 \rangle \langle #1  | }
\newcommand{\kt}[1]{ | #1 \rangle}
\newcommand{\br}[1]{\langle #1  | }
\newcommand{\diag}{\mathrm{diag}}
\definecolor{bluenet}{RGB}{68, 119, 170}
\definecolor{rednet}{RGB}{204, 102, 119}
\definecolor{orangenet}{RGB}{221, 204, 119}
\begin{document}
\title{Covariance matrix-based criteria for network entanglement}
\author{Kiara Hansenne}
\email{kiara.hansenne@uni-siegen.de}
\author{Otfried Gühne}
\affiliation{Naturwissenschaftlich-Technische Fakultät, Universität Siegen, Walter-Flex-Stra\ss e 3, 57068 Siegen, Germany.}

\begin{abstract}
    Quantum networks offer a realistic and practical scheme for generating multiparticle entanglement and implementing multiparticle quantum communication protocols. However, the correlations that can be generated in networks with quantum sources and local operations are not yet well understood. 
    Covariance matrices, which are powerful tools in entanglement theory, have been also applied to the
    network scenario. We present simple proofs for the decomposition of such matrices into the sum of positive semidefinite block matrices and, based on that, develop analytical and computable necessary criteria for preparing states in quantum networks. These criteria can be applied to networks in which any two nodes share at most one source, such as all bipartite networks.
\end{abstract}

\maketitle

\section{Introduction}
    \begin{figure}
        \centering
        \includegraphics[height=5cm]{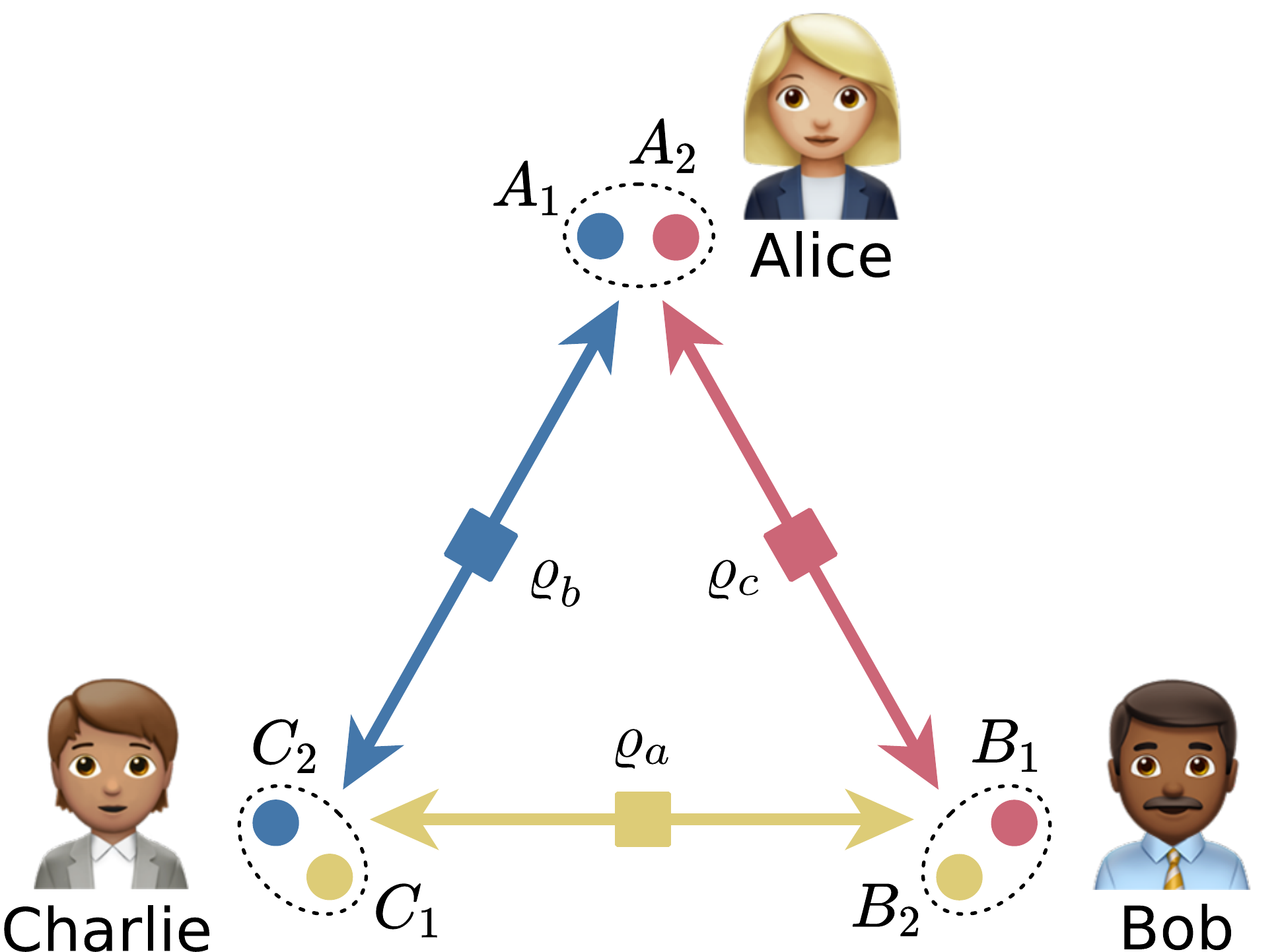}
        \caption{Basic triangle network. Each source distributes subsystems to the three nodes, Alice, Bob and Charlie. They each end up with a bipartite system $X=X_1X_2$ ($X=A,B,C$), on which they could apply a local operation.} 
        \label{fig:triangle}
    \end{figure}  
    Entanglement is a central element in quantum theory and subject of famous debates at the beginning of the 20th century \cite{einstein1935, schroedinger1935} and gained the status of a resource with the advent of quantum information theory some decades later (see Refs.\@ \cite{otfried2009, friis2019} for reviews). Entanglement between two parties has been widely studied and characterised, but much less is known regarding multiparticle entanglement. Indeed, when more than two parties are involved, the structure of entanglement becomes more complex, with non-equivalent classes of entanglement appearing \cite{otfried2009, friis2019}. Apart from the foundational interest in understanding the structure of multiparticle entanglement, the {significance} lies in the fact that it also is a resource for many quantum information applications, such as quantum conference key distribution \cite{murta2020}, quantum error correcting codes \cite{scott2004}, or high-precision metrology \cite{toth2012}. However, generating and manipulating genuine multipartite entangled states experimentally is a difficult task, particularly when the number of entangled parties is large (see Ref.\@ \cite{friis2019} and references therein). To circumvent this issue, {the arguably more experimental friendly} concept of quantum networks has been introduced \cite{kimble2008, simon2017}. In the network setup, the goal is to generate a global $N$-partite state using a set of sources (represented by edges in a (hyper)graph) that distribute (connect) subsystems of entangled states to the different parties of the network (the nodes of the (hyper)graph). Strictly, we require sources to distribute particles to at most $N-1$ nodes, and the parties might be allowed to apply a local operation to their system. Figure \ref{fig:triangle} details an example of a tripartite network with bipartite sources. 
    
    The power and limitations of such networks have already been studied in Refs.\@ \cite{navascues2020, aberg2020, kraft2021, luo2021, tristan2021, luo2021bis, contreras2022, hansenne2022, makuta2022, wang2022}, however, it is still unclear which useful quantum states can actually be prepared using them. In the most general definition, the parties and the sources can additionally share a global classical random variable, and we say that such networks {arise from} local operations and shared randomness (\textsc{losr}). 
    {However, it might be also realistic to consider models where there is no access to such a global variable. The main aim of this paper is to show that covariance matrices (\textsc{cm}s) can be used to derive strong criteria for
    entanglement in the various network scenarios.}
    
    {First introduced in for continuous variable systems \cite{werner2001, giedke2001},} covariance matrices possess useful properties and have previously been used to characterise bipartite and multiparticle entanglement \cite{guehne2007, gittsovich2008}. Recently, they have also been used to derive necessary criteria for network scenarios \cite{aberg2020, kraft2021, zp2022, beigi2022}. In Ref.\@ \cite{aberg2020}, the authors formulate a necessary condition for a probability distribution to arise from measurements performed on a quantum network state. The condition states that the covariance matrix of the probability distribution can be decomposed into the sum of positive semi-definite (\textsc{psd}) block matrices,\footnote{We call this decomposition into a sum of \textsc{psd} block matrices the block decomposition of a covariance matrix.} and can be formulated as a semidefinite program (\textsc{sdp}). This result was applied in Ref.\@ \cite{kraft2021} to derive practical analytical criteria for networks with dichotomic measurements and for networks with bipartite sources. More recently, similar \textsc{sdp}s were developed in Ref.\@ \cite{zp2022} for the case of \textsc{losr} networks, with extra assumptions on rank and purity. Finally, striking generality, the authors of Ref.\@ \cite{beigi2022} showed that in the case of no-common-double-source (\textsc{ncds}) networks,\footnote{In no-common-double-source networks, two nodes can hold subsystems from at most one common source.} the block decomposition criterion holds for all generalised probabilistic theories.

    In this paper, we propose an alternative proof to the block decomposition of the \textsc{cm} of triangle network state derived in Ref.\@ \cite{aberg2020}. From it, we obtain an analytical, computable {necessary} criterion for {a state to arise from a triangle network}. This criterion can also be used to upper bound the maximal fidelity a triangle network state can have to a given target state, for instance the GHZ state. We {discuss} the fact that these bounds are still valid for networks with \textsc{losr} and finally show how this result can be extended to \textsc{ncds} networks.

\section{Network entanglement}  \label{sec:net_ent}
    The general triangle network situation involves Alice, Bob, and Charlie wanting to share a tripartite (entangled) state, but they only have access to bipartite sources, as shown in Figure \ref{fig:triangle}. This situation differs from the usual consideration of tripartite entanglement, where the parties have access to a tripartite state generated by a global source. In addition, the parties in the triangle network considered here do not have access to classical communication, which prevents them from executing teleportation or entanglement swapping protocols. Although classical communication is usually considered a cheap resource in quantum communication protocols, it does require time. The classical information must be communicated across the network, which can introduce undesirable latency, particularly in a context where quantum memories are still sub-optimal and expensive.
    
    In this manuscript, we will focus on different triangle network scenarios: the basic triangle network (\textsc{btn}) where bipartite sources are shared among the parties, the triangle network with local unitaries (\textsc{utn}) where Alice, Bob and Charlie are allowed to perform unitary operations on their local systems, and finally, the triangle network with local channels (\textsc{ctn}) where, as the name indicates, local channels are performed by the parties. 
    
    In the \textsc{btn}, three (entangled) bipartite source states ($\varrho_a$, $\varrho_b$ and $\varrho_c$) are prepared and each subsystem is sent to a node according to the distribution in Figure \ref{fig:triangle}. Alice, Bob and Charlie own the bipartite systems $A_1A_2 = A$, $B_1B_2 = B$ and $C_1C_2 = C$ respectively. The global state of the system $ABC$ reads
    \begin{equation} \label{eq:btn}
        \varrho_{\textsc{btn}} = \varrho_b \otimes \varrho_c \otimes \varrho_a.
    \end{equation}
    Notice that the order of the subsystems is not $ABC$ for the right-hand side, it is organised following the partition $C_2 A_1 A_2 B_1 B_2 C_1$. The reduced states of Alice, Bob and Charlie are separable bipartite states. This scenario has for instance been studied in the context of pair entangled network states \cite{contreras2022}.  In this work, we will assume that the sources all send $d \times d$-dimensional states, while keeping in mind that all the results can easily be extended to unequal dimensions. 
    
    In the following two scenarios, we allow the parties to perform operations on their local systems.
        
    First, we only give Alice, Bob and Charlie the possibility of performing a unitary operation on their system, namely $U_A$, $U_B$ and $U_C$ respectively. This leads to the following global state
    \begin{equation} \label{eq:utn}
        \varrho_{\textsc{utn}} = (U_A \otimes U_B \otimes U_C) \varrho_{\textsc{btn}} (U_A^\dagger \otimes U_B^\dagger \otimes U_C^\dagger).
    \end{equation}
    Alice, Bob and Charlie no longer necessarily hold separable bipartite states. We note that here again, there is no tripartite interaction between the parties.
        
    Second, we drop the unitary restriction on the local operations, meaning that Alice, Bob and Charlie may now apply channels on their local systems, represented by completely positive and trace preserving maps $\mathcal{E}_A$, $\mathcal{E}_B$ and $\mathcal{E}_C$ respectively. In that case, the global state reads
    \begin{equation} \label{eq:ctn}
        \varrho_{\textsc{ctn}} = \mathcal{E}_A \otimes \mathcal{E}_B \otimes \mathcal{E}_C (\varrho_{\textsc{btn}}).
    \end{equation}
    We note that if the dimensions match,  $\{\varrho_{\textsc{btn}}\} \subset \{\varrho_{\textsc{utn}}\} \subset \{\varrho_{\textsc{ctn}}\}$, but in general
    $\textsc{ctn}$ networks can be defined in broader scenarios, since the maps $\mathcal{E}_A$, $\mathcal{E}_B$ and $\mathcal{E}_C$
    may reduce the dimension.

    A special instance of networks are the previously-mentioned \textsc{ncds} networks: In this case, any two parties share subsystems from at most one source. For instance, all bipartite networks are \textsc{ncds}. 

    Finally, we could also allow the whole system to be coordinated by a global classical random variable $\lambda$. In the most general situation, this would result in states of the form $\varrho_\Delta = \sum_\lambda p_\lambda \varrho_{\textsc{ctn}}^{(\lambda)}$. These networks are called \textsc{losr} networks. {One direct consequence is that the set of states $\{\varrho_\Delta\}$ is convex, whereas Eqs.\@ (\ref{eq:btn} -- \ref{eq:ctn}) lead to non-convex state sets.} As already pointed out in Refs.\@ \cite{navascues2020, hansenne2022}, in the case of unbounded source dimensions, it suffices to consider that either the state or the parties have solely access to the global variable. 

\section{Covariance matrices}
    In this paper, the tools used to analyse network entanglement are covariance matrices, which characterise states through the covariance of some given observables. In practice, the \textsc{cm} $\Gamma$ is constructed for a state $\varrho$ and a set of observables $\{M_i\}$, and has the following matrix elements
    \begin{equation}
        [\Gamma (\{M_i\},\varrho)]_{mn} = \mean{M_m M_n}_\varrho - \mean{M_m}_\varrho \mean{M_n}_\varrho
    \end{equation}
    with $\mean{X}_\varrho = \tr(X \varrho)$ being the expectation value of the observable $X$ when the state of the system is given by $\varrho$.
    As in network scenarios the parties can only access their local systems, it is sensible to choose observables $A_i$, $B_j$ and $C_k$ that only act on Alice's, Bob's and Charlie's side respectively. Explicitly, we have $A_i \otimes \id_B \otimes \id_C$, $\id_A \otimes B_j \otimes \id_C$ and $\id_A \otimes \id_B \otimes C_k$, and we will use the notation $\{A_i, B_j, C_k\} = \{A_i \otimes \id_B \otimes \id_C\}_i \cup \{\id_A \otimes B_j \otimes \id_C\}_j \cup \{\id_A \otimes \id_B \otimes C_k\}_{k}$. In that case, the \textsc{cm} of a tripartite state $\varrho$ has the following block structure
    \begin{equation} \label{eq:cm_block_str}
    	\Gamma (\{A_i,B_j,C_k\},\varrho) = 
    	\begin{pmatrix}
        	\Gamma_A & \gamma_E & \gamma_F \\
        	\gamma_E ^T & \Gamma_ B & \gamma_G \\
        	\gamma_F^T & \gamma_G^T & \Gamma_C
    	\end{pmatrix}
    \end{equation}
    where $\Gamma_A = \Gamma (\{A_i\},\varrho^{(A)})$ is the \textsc{cm} of the reduced state $\varrho^{(A)}$.\footnote{For a state $\varrho$ of a system $XY$, we denote by $\tr_Y(\varrho) = \varrho^{(X)}$ the reduced state of the subsystem $X$.} The matrices $\Gamma_B$ and $\Gamma_C$ have analogous expressions. The elements of the off-diagonal block $\gamma_E$ are given by the real numbers
    \begin{equation} \label{eq:gammaE}
    	[\gamma_E]_{mn} = \mean{A_m \otimes B_n}_{\varrho} - \mean{A_m }_{\varrho} \mean{ B_n}_{\varrho},
    \end{equation}
    with identity operators padded where needed (note that Eq.\@ \eqref{eq:gammaE} can be defined equivalently by taking the expectation values on $\varrho^{(AB)}$). Again, the matrices $\gamma_F$ and $\gamma_G$ can be expressed in a similar way.
    
\section{Basic triangle network}
    In this section, we derive the explicit structure of \textsc{cm}s of \textsc{btn} states. Let us first define what we will call the reduced observable $A_i^{(2)}$ of $A_i$, which describes
    an effective observable on the system $A_2.$ It is given by
    \begin{equation} \label{eq:redobs}
        \begin{split}
            A_i^{(2)}   &= \tr_{A_1} \left(A_i [\varrho_{\textsc{btn}}^{(A_1)} \otimes \id_{A_2}] \right) .
        \end{split}
    \end{equation}
    Note that $A_i$ acts on  both $A_1$ and $A_2$, so $A_i^{(2)}$ is an operator acting 
    on states of $A_2$, where the effect of $\varrho_b = \varrho_\textsc{btn}^{(A_1C_2)}$ has been taken into account. We define $B_j^{(1)}$ similarly and will use the notation $\{A_i^{(2)}, B_j^{(1)}\} = \{A_i^{(2)} \otimes \id_{B_1}\}_i \cup \{\id_{A_2} \otimes B_j^{(1)}\}_j $. The off-diagonal blocks of Eq.\@ \eqref{eq:cm_block_str} can be expressed using the reduced observables, that is,
    \begin{equation}\label{eq:gammaEwithredobs}
    	[\gamma_E]_{mn} = \mean{A_m^{(2)} \otimes B_n^{(1)}} - \mean{A_m^{(2)} } \mean{B_n^{(1)}}.
    \end{equation} 
    To see this, we notice that the reduces state $\varrho_{\textsc{btn}}^{(AB)}$ is a product state with respect to the partition $A_1 \mid A_2B_1 \mid B_2$ and use a local basis decomposition of the observables $A_m$ and $B_n$ (see Appendix \ref{app:redobs}). All expectation values of Eq.\@ \eqref{eq:gammaEwithredobs} are taken with respect to the state $\varrho_{\textsc{btn}}^{(A_2B_1)}$, which is nothing but $\varrho_c$.

    This representation means  that $\gamma_E$ can be computed using only the reduced observables on the state $\varrho_{\textsc{btn}}^{(A_2B_1)}$. This is a direct consequence of the fact that the marginal states of Alice, Bob and Charlie are a product states, which will no longer be the case in the next scenarios. Let us now introduce our first proposition.
    \begin{Proposition}[Block decomposition for \textsc{cm}s of \textsc{btn} states] \label{prop:cmofbtn}
    	The \textsc{cm} of a \textsc{btn} state with local observables $\{A_i, B_j,C_k\}$ can be decomposed as
    	\begin{equation} \label{eq-gammadec}
    	    \begin{split}
        		\Gamma_{\textsc{btn}}   =& \Gamma (\{A_i,B_j,C_k\},\varrho_{\textsc{btn}}) \\
        	                            =&\underbrace{\begin{pmatrix}
        						\Gamma_{A_2} & \gamma_E & 0 \\
        						\gamma_E^T & \Gamma_{B_1} & 0 \\
        						0 & 0 & 0
        					\end{pmatrix}}_{T_c}
        				+	\underbrace{\begin{pmatrix}
        						\Gamma_{A_1} & 0 & \gamma_F \\
        						0 & 0 & 0 \\
        						\gamma_F^T & 0 & \Gamma_{C_2}
        					\end{pmatrix}}_{T_b}
        				+	\underbrace{\begin{pmatrix}
        						0 & 0 & 0 \\
        						0 & \Gamma_{B_2} & \gamma_G \\
        						0 & \gamma_G^T & \Gamma_{C_1}
        					\end{pmatrix}}_{T_a}
        				+ 	\underbrace{\begin{pmatrix}
        						R_A & 0 & 0 \\
        						0 & R_B & 0 \\
        						0 & 0 & R_C
        				\end{pmatrix}}_{R}
    	    \end{split}
    	\end{equation}
    	where the matrices $T_a$, $T_b$ and $T_c$ are \textsc{cm}s for the state-dependent reduced observables, i.e.\@, 
    	\begin{equation}
    		T_c = \Gamma(\{A_i^{(2)} , B_j^{(1)}\}, \varrho_{\textsc{btn}}^{(A_2B_1)}).
    	\end{equation}
    	and analogously for $T_b$ and $T_a$. The matrix $R$ is positive semi-definite. 
    \end{Proposition}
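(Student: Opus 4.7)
\emph{Proof plan.} The plan is to verify the decomposition (\ref{eq-gammadec}) block by block. The off-diagonal part is essentially free: by construction $T_c$, $T_b$, $T_a$ place their single off-diagonal blocks exactly where $\gamma_E$, $\gamma_F$, $\gamma_G$ sit in $\Gamma_{\textsc{btn}}$, while $R$ is block-diagonal. Hence the off-diagonal content of the right-hand side of (\ref{eq-gammadec}) matches that of $\Gamma_{\textsc{btn}}$ provided each $\gamma$ coincides with the off-diagonal block of the corresponding reduced-observable covariance matrix on the associated source state, which is precisely identity (\ref{eq:gammaEwithredobs}) (together with its analogues for $\gamma_F$ and $\gamma_G$), already established. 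The remaining task is therefore to analyse the diagonal blocks and to show that the residuals $R_A$, $R_B$, $R_C$ are positive semidefinite.

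Consider the Alice diagonal block. Since $\varrho_{\textsc{btn}}^{(A)} = \rho_1 \otimes \rho_2$ with $\rho_1 = \varrho_{\textsc{btn}}^{(A_1)}$ and $\rho_2 = \varrho_{\textsc{btn}}^{(A_2)}$, I would expand each observable in a Hermitian product basis, $A_m = \sum_\alpha X_\alpha^m \otimes Y_\alpha^m$, so that $A_m^{(1)} = \sum_\alpha \tr(Y_\alpha^m \rho_2)\, X_\alpha^m$ and $A_m^{(2)} = \sum_\alpha \tr(X_\alpha^m \rho_1)\, Y_\alpha^m$. Substituting these expansions into $\Gamma_A$, $\Gamma_{A_1}$, $\Gamma_{A_2}$ and using $\mean{A_m}_{\varrho_{\textsc{btn}}} = \mean{A_m^{(1)}}_{\rho_1} = \mean{A_m^{(2)}}_{\rho_2}$, the cross-terms cancel and one arrives at
\begin{equation*}
    [R_A]_{mn} \;=\; [\Gamma_A - \Gamma_{A_1} - \Gamma_{A_2}]_{mn} \;=\; \sum_{\alpha, \beta} [\Gamma^X]_{(\alpha m)(\beta n)} \, [\Gamma^Y]_{(\alpha m)(\beta n)},
\end{equation*}
where $\Gamma^X = \Gamma(\{X_\alpha^m\}_{\alpha,m}, \rho_1)$ and $\Gamma^Y = \Gamma(\{Y_\alpha^m\}_{\alpha,m}, \rho_2)$ are the covariance matrices of the enlarged observable families indexed by the compound label $(\alpha, m)$. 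By the symmetry of the \textsc{btn}, $R_B$ and $R_C$ admit identical expressions.

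The main obstacle is then to prove $R_A \succeq 0$. Both $\Gamma^X$ and $\Gamma^Y$ are positive semidefinite by virtue of being covariance matrices, so the Schur product theorem gives $\Gamma^X \circ \Gamma^Y \succeq 0$, and the displayed formula rewrites as $R_A = V^T (\Gamma^X \circ \Gamma^Y) V$ with $V \kt{m} = \sum_\alpha \kt{(\alpha, m)}$. A congruence of a PSD matrix is PSD, so $R_A \succeq 0$; the same argument gives $R_B, R_C \succeq 0$ and closes the proof. I expect the only delicate point to be spotting this Schur-product-plus-contraction packaging behind $R_A$; once it is in place, the rest is essentially bookkeeping.
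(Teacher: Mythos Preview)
Your proof is correct and rests on the same core observation as the paper's: on the product state $\varrho_{\textsc{btn}}^{(A)} = \rho_1 \otimes \rho_2$, the residual $R_A = \Gamma_A - \Gamma_{A_1} - \Gamma_{A_2}$ factorises bilinearly over the two tensor factors, and positivity then follows from positivity of the two single-factor covariance matrices. The packaging differs slightly. The paper fixes an arbitrary test vector $\kt{x}$, forms $M = \sum_i x_i A_i$, takes a Schmidt-like decomposition $M = \sum_j P_j \otimes Q_j$, and shows directly that $\br{x} R_A \kt{x} = \tr\big((\Gamma(P))^T \Gamma(Q)\big) \geq 0$, the last step using that the trace of a product of two \textsc{psd} matrices is nonnegative. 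You instead decompose each $A_m$ separately, assemble the enlarged covariance matrices $\Gamma^X$, $\Gamma^Y$ over the compound index $(\alpha,m)$, and recognise $R_A$ globally as a congruence $V^T(\Gamma^X \circ \Gamma^Y)V$, invoking the Schur product theorem. Your route yields a clean matrix identity for $R_A$ (not just for its quadratic form) at the cost of carrying the auxiliary index; the paper's route avoids that index but works vector by vector. The two arguments are equivalent rewritings of the same factorisation.
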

    {Using Eq.\@ \eqref{eq:gammaEwithredobs}, it is only left to show that $R_A = \Gamma_A - \Gamma_{A_1} - \Gamma_{A_2}$ is \textsc{psd}, as well as $R_B$ and $R_C$. To do this, we show that $\bra{x}R_A\ket{x}$ can always be written as the trace of a product of \textsc{psd} matrices.}
    The proof is given in Appendix \ref{app:prop1}. 

    We want to emphasize that the results presented in this manuscript are valid only in the context of finite-dimensional Hilbert spaces. A potential future research direction is to investigate how these results can be extended to the infinite-dimensional case. As mentioned in the introduction, \textsc{cm}s are also well suited for continuous variable systems.

    Armed with this, we can now derive the structure of the covariance matrix of a \textsc{btn} state when the observables are full sets of local orthogonal observables, namely $\{A_i\} = \{G^{(A_1)}_\alpha \otimes G^{(A_2)}_\beta\}$, where $\{G^{(A_k)}_\alpha\}$  is a set of $d^2$
    orthogonal observables acting on states of $A_k$ such that $\tr (G_\alpha^{(A_k)} G_{\alpha'}^{(A_k)} ) = d \delta_{\alpha \alpha'}$ ($k=1,2$). This is done in a similar way  for the systems $B$ and $C$. When the situation is explicit enough, we will drop the superscripts. 
    {In the case of qubits, the Pauli operators $\sigma_x$, $\sigma_y$ and $\sigma_z$ together with the $2 \times 2$ identity operator $\id$ are an obvious choice.}
    With such sets of observables, a direct computation (see Appendix \ref{app:Rexplicit}) shows that 
    \begin{equation}\label{eq:Rexplicitelements}
        \begin{split}
            R_X = & {\Gamma_X - \Gamma_{X_1} - \Gamma_{X_2}} \\
                = & \Gamma\left(\{G_\alpha\},\varrho_{\textsc{btn}}^{(X_1)}\right) \otimes \Gamma\left(\{G_\beta\},\varrho_{\textsc{btn}}^{(X_2)}\right), \quad X=A,B,C
        \end{split}
    \end{equation}
    and therefore $R$ is trivially \textsc{psd} in the case of full sets of orthogonal observables.

    The structure of the matrices $T_a$, $T_b$ and $T_c$ can also be further explored. First, let us compute the reduced observables
    \begin{equation}
    	A^{(2)}_i = \tr(G_\alpha \varrho_{\textsc{btn}}^{(A_1)}) G_\beta = a^{(1)}_\alpha G_\beta .
    \end{equation}
    where the coefficients $a^{(1)}_\alpha =  \tr(G_\alpha \varrho_{\textsc{btn}}^{(A_1)}) $ 
    are nothing but the {(real)} Bloch coefficients of the reduced states. In Appendix \ref{app:gammaA2}, we show that
    \begin{equation} \label{eq:gammaA2}
    	\Gamma_{A_2} = \kb{\vec{a}^{(1)}} \otimes \Gamma(\{G_\beta\},\varrho_{\textsc{btn}}^{(A_2)})
    \end{equation}
    and that
    \begin{equation} \label{eqEijdec}
        \gamma_E = \kt{\vec{a}^{(1)}}\br{\vec{b}^{(2)}} \otimes \gamma(\{G_\beta,G_\alpha\},\varrho_{\textsc{btn}}^{(A_2B_1)})
    \end{equation}
    with $\kt{\vec{a}^{(1)}} = (a_0^{(1)},\dots,a_{d^2-1}^{(1)})^T{\in \mathbb{R}^{d^2}}$ and similarly for $\kt{\vec{b}^{(2)}}$. The matrix $\gamma(\{G_\beta,G_\alpha\},\varrho_{\textsc{btn}}^{(A_2B_1)})$ is the off-diagonal block of the \textsc{cm} with the same observables and state.
    Finally, we can write 
    \begin{equation}
    	T_c =\kt{\vec{a}^{(1)}\oplus \vec{b}^{(2)}}\br{\vec{a}^{(1)} \oplus \vec{b}^{(2)}} \star \Gamma\left(\varrho_{\textsc{btn}}^{(A_2B_1)}\right),
    \end{equation}
    where $ \star $ is the {"block-wise" Kronecker product, called Khatri-Rao product \cite{khatri1968, liu1999}. Formally, if $A$ and $B$ are block matrices, the $i,j$-th block of their Khatri-Rao product, $(A \star B)_{i,j}$, is the Kronecker product of the $i,j$-th block of $A$ and $B$, $A_{i,j} \otimes B_{i,j}$. For instance, if $A$ and $B$ are $2 \times 2 $ block matrices, 
    \begin{equation}
        A = \begin{pmatrix}
            A_{0,0} & A_{0,1} \\
            A_{1,0} & A_{1,1}
        \end{pmatrix}, \quad 
        B = \begin{pmatrix}
            B_{0,0} & B_{0,1} \\
            B_{1,0} & B_{1,1}
        \end{pmatrix},
    \end{equation}
    we obtain
    \begin{equation}
        A \star B = \begin{pmatrix}
            A_{0,0} \otimes B_{0,0} & A_{0,1} \otimes B_{0,1} \\
            A_{1,0} \otimes B_{1,0} & A_{1,1} \otimes B_{1,1}
        \end{pmatrix}
    \end{equation}
     (see Ref.\@ \cite{liu1999} for more details). }

    Finally, one has 
    \begin{Proposition} \label{prop:decofgamma}
    	The \textsc{cm} of a \textsc{btn} state, using complete sets of orthogonal observables acting locally can be decomposed as
            \begin{equation} \label{eq:gammadecorthonormalobs}
    		\begin{split}
    			\Gamma_{\textsc{btn}} = & \kt{\vec{a}^{(1)}\oplus \vec{b}^{(2)}}\br{\vec{a}^{(1)} \oplus \vec{b}^{(2)}} \star \Gamma\left(\varrho_{\textsc{btn}}^{(A_2B_1)}\right) 
    			+ \kt{\vec{b}^{(1)}\oplus \vec{c}^{(2)}}\br{\vec{b}^{(1)} \oplus \vec{c}^{(2)}} \star \Gamma\left(\varrho_{\textsc{btn}}^{(B_2C_1)}\right) \\ &
    			+ \kt{\vec{b}^{(1)}\oplus \vec{c}^{(2)}}\br{\vec{b}^{(1)} \oplus \vec{c}^{(2)}} \star \Gamma\left(\varrho_{\textsc{btn}}^{(B_2C_1)}\right)
    			+ \mathrm{diag}\left\{\Gamma\left(\varrho_{\textsc{btn}}^{(X_1)}\right) \otimes \Gamma\left(\varrho_{\textsc{btn}}^{(X_2)}\right), X=A,B,C\right\}.
    		\end{split}
    	\end{equation}
    \end{Proposition}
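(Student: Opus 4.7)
The plan is to assemble Proposition \ref{prop:decofgamma} directly from Proposition \ref{prop:cmofbtn} together with the three explicit identities already established just above it: the factorisation of $R_X$ in Eq.\@~\eqref{eq:Rexplicitelements}, the Bloch-coefficient form of the diagonal block in Eq.\@~\eqref{eq:gammaA2}, and the Bloch-coefficient form of the off-diagonal block in Eq.\@~\eqref{eqEijdec}. In other words, no new analytical input is required once the full sets of local orthogonal observables $\{G_\alpha^{(X_1)} \otimes G_\beta^{(X_2)}\}$ are fixed; Proposition \ref{prop:decofgamma} is the statement that the abstract decomposition $\Gamma_{\textsc{btn}} = T_c + T_b + T_a + R$ of Proposition \ref{prop:cmofbtn} admits an explicit Khatri--Rao / Kronecker rewriting under this choice of observables.

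The first step would be to substitute Eq.\@~\eqref{eq:Rexplicitelements} into the $R$ block of Eq.\@~\eqref{eq-gammadec}. Since $R$ is already block-diagonal with blocks $R_A, R_B, R_C$, and each $R_X$ factorises as $\Gamma(\{G_\alpha\},\varrho_{\textsc{btn}}^{(X_1)}) \otimes \Gamma(\{G_\beta\},\varrho_{\textsc{btn}}^{(X_2)})$, this immediately yields the $\mathrm{diag}\{\cdot\}$ term of Eq.\@~\eqref{eq:gammadecorthonormalobs}. The second step is to rewrite each $T_x$ as a single Khatri--Rao product. For $T_c$ this is exactly Eq.\@~(16) in the excerpt: the upper-left block $\Gamma_{A_2}$ equals $\kb{\vec a^{(1)}} \otimes \Gamma(\{G_\beta\},\varrho_{\textsc{btn}}^{(A_2)})$ by Eq.\@~\eqref{eq:gammaA2}, the analogous lower-right block is $\kb{\vec b^{(2)}} \otimes \Gamma(\{G_\alpha\},\varrho_{\textsc{btn}}^{(B_1)})$, and the off-diagonal block $\gamma_E$ has the rank-one-times-$\gamma$ form of Eq.\@~\eqref{eqEijdec}. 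These three Kronecker-product blocks are precisely the four blocks of the Khatri--Rao product $\kt{\vec a^{(1)} \oplus \vec b^{(2)}}\br{\vec a^{(1)}\oplus \vec b^{(2)}} \star \Gamma(\varrho_{\textsc{btn}}^{(A_2 B_1)})$, since the outer product $\kt{\vec a^{(1)}\oplus \vec b^{(2)}}\br{\vec a^{(1)}\oplus \vec b^{(2)}}$ has $\kb{\vec a^{(1)}}$, $\kt{\vec a^{(1)}}\br{\vec b^{(2)}}$, and $\kb{\vec b^{(2)}}$ as its three independent blocks. The third step is to apply the same argument, with the obvious relabelling of sources and subsystems, to $T_b$ (source $\varrho_b$ shared by $A_1C_2$) and to $T_a$ (source $\varrho_a$ shared by $B_2 C_1$).

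The main obstacle, if one can call it that, is purely organisational: one must verify that the outer product $\kt{\vec x \oplus \vec y}\br{\vec x\oplus \vec y}$ viewed as a block matrix with blocks $\kb{\vec x}$, $\kt{\vec x}\br{\vec y}$, $\kt{\vec y}\br{\vec x}$, $\kb{\vec y}$ matches block-for-block the structure of $T_c$ inherited from Proposition \ref{prop:cmofbtn}, and that the companion factor $\Gamma(\varrho_{\textsc{btn}}^{(A_2 B_1)})$ has a matching $2\times 2$ block structure whose blocks are exactly $\Gamma(\{G_\beta\},\varrho_{\textsc{btn}}^{(A_2)})$, $\gamma(\{G_\beta, G_\alpha\},\varrho_{\textsc{btn}}^{(A_2B_1)})$ and $\Gamma(\{G_\alpha\},\varrho_{\textsc{btn}}^{(B_1)})$. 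Once this block-by-block matching is recorded, the definition of $\star$ recalled just before the proposition makes the identification automatic, and summing over the three Khatri--Rao contributions plus the block-diagonal $R$-term reproduces Eq.\@~\eqref{eq:gammadecorthonormalobs}. In short, the proof is a careful unpacking of the Khatri--Rao notation rather than any further computation with the state $\varrho_{\textsc{btn}}$.
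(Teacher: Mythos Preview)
Your proposal is correct and mirrors the paper's own route exactly: Proposition~\ref{prop:decofgamma} is not given a separate proof in the paper but is presented as the immediate assembly of Proposition~\ref{prop:cmofbtn} with Eqs.~\eqref{eq:Rexplicitelements}, \eqref{eq:gammaA2} and \eqref{eqEijdec}, followed by the observation that the resulting blocks of $T_c$ are precisely those of the Khatri--Rao product (the paper's Eq.~(16)). Your description of this as ``a careful unpacking of the Khatri--Rao notation rather than any further computation'' is an accurate summary of what the paper does.
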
  
    Therefore, in order to test compatibility with the \textsc{btn} scenario for a given state, one can check if its \textsc{cm} can be written like the right-hand side of the above equation. While this might be cumbersome to test, we notice that the matrix $\Gamma_{\textsc{btn}} -R$ is also \textsc{psd}, which can also be used to check compatibility in the following way: 
    \begin{Proposition}[Positivity condition]\label{positivitycdt}
    	The matrix
        \begin{equation}\label{eq:Xi}
            \begin{split}
                \Xi (\varrho_{\textsc{btn}}) =& \  \Gamma\left(\{G_\alpha^{A_1} \otimes G_\beta^{A_2} ,G_\gamma^{B_1} \otimes G_\delta^{B_2} ,G_\epsilon^{C_1} \otimes G_\zeta^{C_2} \},\varrho_{\textsc{btn}}\right) \\
                & - \diag\left\{\Gamma\left(\{G_\alpha^{X_1}\},\varrho_{\textsc{btn}}^{(X_1)}\right) \otimes \Gamma\left(\{G_\beta^{X_2}\},\varrho_{\textsc{btn}}^{(X_2)}\right), X=A,B,C\right\}
            \end{split}
        \end{equation}
    	is positive semi-definite.
    \end{Proposition}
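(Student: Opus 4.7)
The plan is to recognise Proposition \ref{positivitycdt} as a direct corollary of Proposition \ref{prop:cmofbtn} together with the closed-form identification of $R$ in Eq.\@ \eqref{eq:Rexplicitelements}; no genuinely new argument should be needed.

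First, I would invoke Proposition \ref{prop:cmofbtn} to write the decomposition
\begin{equation*}
    \Gamma_{\textsc{btn}} = T_a + T_b + T_c + R,
\end{equation*}
where each $T_i$ is a covariance matrix of a reduced bipartite state with the state-dependent reduced observables, and $R = \diag\{R_A, R_B, R_C\}$ is block diagonal. Next, I would specialise to local observables of product form $\{A_i\} = \{G_\alpha^{(A_1)} \otimes G_\beta^{(A_2)}\}$ (and analogously for $B,C$). In that setting Eq.\@ \eqref{eq:Rexplicitelements} gives the explicit formula
\begin{equation*}
    R_X = \Gamma\!\left(\{G_\alpha\}, \varrho_{\textsc{btn}}^{(X_1)}\right) \otimes \Gamma\!\left(\{G_\beta\}, \varrho_{\textsc{btn}}^{(X_2)}\right), \qquad X=A,B,C,
\end{equation*}
which is exactly the block-diagonal matrix that is subtracted from $\Gamma_{\textsc{btn}}$ in the definition of $\Xi(\varrho_{\textsc{btn}})$ in Eq.\@ \eqref{eq:Xi}. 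Consequently $\Xi(\varrho_{\textsc{btn}}) = T_a + T_b + T_c$.

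Finally, since by Proposition \ref{prop:cmofbtn} each $T_i$ is itself a bona fide covariance matrix (for instance $T_c = \Gamma(\{A_i^{(2)}, B_j^{(1)}\}, \varrho_{\textsc{btn}}^{(A_2B_1)})$), and covariance matrices are always positive semi-definite, the sum $T_a + T_b + T_c$ is also \textsc{psd}, which proves the claim.

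There is no real obstacle here: the entire analytical content of the statement is already packaged into Proposition \ref{prop:cmofbtn} and Eq.\@ \eqref{eq:Rexplicitelements}. The value of Proposition \ref{positivitycdt} is purely practical, in that testing the single semidefinite condition $\Xi \succeq 0$ is much more convenient than verifying the full block-sum decomposition of Proposition \ref{prop:decofgamma}, while still giving a nontrivial necessary criterion for a state to have \textsc{btn} form.
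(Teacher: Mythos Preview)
Your proof is correct and follows precisely the paper's own reasoning: the paper does not give a separate formal proof of Proposition \ref{positivitycdt} but simply notes that $\Gamma_{\textsc{btn}} - R$ is \textsc{psd}, which follows immediately from Proposition \ref{prop:cmofbtn} (since $\Gamma_{\textsc{btn}} - R = T_a + T_b + T_c$ with each $T_i$ a covariance matrix) together with the explicit identification of $R$ in Eq.\@ \eqref{eq:Rexplicitelements}. You have spelled out exactly this argument.
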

    We note that neither term of the right-hand side of Eq.\@ \eqref{eq:Xi} contains the reduced observables, which makes $\Xi$ easy to compute. 

    An advised reader might point out that in order to verify if a given state is compatible with the \textsc{btn} scenario, it suffices to test whether $\varrho_{\textsc{btn}} = \varrho_{\textsc{btn}}^{(A_2B_1)} \otimes\varrho_{\textsc{btn}}^{(B_2C_1)} \otimes\varrho_{\textsc{btn}}^{(C_2A_1)}$, up to reordering of the subsystems. We stress that although this simple equation does answer the question, it requires the knowledge of the full density operator, whereas \textsc{cm}-based criteria only need expectation values of some chosen observables in order to be evaluated.

    To close this section on \textsc{btn}, we present a few examples.
    First, we note that the lowest dimensional achievable states are sixty-four-dimensional states (six qubits, or three ququarts\footnote{{A ququart (sometimes ququad) is a four-dimensional quantum system.}}), and that the local dimensions cannot be prime numbers. Therefore we start with the three-ququart \textsc{ghz} state,
    \begin{equation}
        \kt{GHZ_4} = \frac{1}{\sqrt{2}} (\kt{000} + \kt{333}),
    \end{equation}
    which we mix with white noise
    \begin{equation}
        \varrho_{GHZ_4}(v) = v \kb{GHZ_4} + (1-v) \frac{\id_{64}}{64},
    \end{equation}
    {where $v$ is the visibility.}
    The corresponding $\Xi$ matrix is \textsc{psd} only for $p=0$, meaning that the \textsc{ghz} state cannot be prepared in a \textsc{btn} network even with a very high amount of white noise. The same result is obtained when applied to the four-level \textsc{ghz} state, $\nicefrac{1}{2}(\kt{000}+ \kt{111} + \kt{222} + \kt{333})$.
    
    Proposition \ref{positivitycdt} may also be applied to three-ququart Dicke states, which are defined by
    \begin{equation}
        \kt{D_{3,4,k}} = \mathcal{N} \sum_{i_1+i_2+i_3=k}\kt{i_1i_2i_3}, \quad k=1,\dots,9,
    \end{equation}
    with $\mathcal{N}$ being a normalisation factor. When mixed with white noise, they cannot be prepared in the \textsc{btn} scenario when $p \neq 0$ and $p \neq 1$ for $k=1$, and when $p\neq0$ for $k=2,\dots,7$.
    
    More generally, by directly applying the result of Proposition \ref{prop:decofgamma}, we can check whether the \textsc{cm} of a \textsc{btn} state can be written like the right-hand side of Eq.\@ \eqref{eq:gammadecorthonormalobs}. By doing that for $\kt{D^3_1}$, we conclude that this state cannot be generated in the \textsc{btn} scenario. On the other hand, the \textsc{cm} of the maximally mixed state $\nicefrac{\id}{64}$ has such a decomposition.

    The nature of interesting states that can be prepared in the \textsc{btn} scenario remains an open question. An obvious approach would be to distribute three Bell pairs across the network, resulting in a three-ququart genuine multipartite entangled triangle state. Getting ahead of the next sections where it will be permitted to apply local transformations on systems $A$, $B$ and $C$, it is less straightforward to see what operations could be applied after distributing for instance Bell pairs.

\section{Triangle network with local operations} \label{sec:trianglenetandlo}
    Let us now consider the situation where Alice, Bob and Charlie can perform unitaries on their respective systems. As described in Section \ref{sec:net_ent}, the global state now reads
    \begin{equation}
        \begin{split}
            \varrho_{\textsc{utn}}
            = & (U_A \otimes U_B \otimes U_C) \varrho_{\textsc{btn}} (U_A^\dagger \otimes U_B^\dagger \otimes U_C^\dagger).
        \end{split}
    \end{equation}
    First, we note that in general, for any set of observables $\{M_i\}$, any unitary $U$ and any state $\varrho$ there exists a orthogonal matrix $O$ such that \cite{gittsovich2008}
    \begin{equation}
        \Gamma(\{M_i\},U\varrho U^\dagger) = \Gamma(\{U^\dagger M_i U\}, \varrho) = O^T \Gamma(\{M_i\}, \varrho)O.
    \end{equation}
    {Note that not all orthogonal transformations of \textsc{cm}s correspond to a unitary transformation on the system.}  
    From that we obtain to the following proposition: 
    \begin{Proposition} \label{prop:GammaDecUTN} 
        Consider the \textsc{cm} of a \textsc{utn} state with observables $A_i$, $B_j$, $C_k$ that only act on the systems $A$, $B$, $C$ respectively. There exist an orthogonal matrix $O=O_A\oplus O_B \oplus O_C$ and a \textsc{btn} state $\varrho_{\textsc{btn}}$ such that the \textsc{cm} of $\varrho_{\textsc{utn}}$ can be written as
        \begin{equation}
            \Gamma_{\textsc{utn}} \equiv \Gamma(\{A_i,B_j,C_k\}, \varrho_{\textsc{utn}}) = O^T \Gamma_{\textsc{btn}}O
        \end{equation}
        with $\Gamma_{\textsc{btn}}$ as in Eq.\@ \eqref{eq-gammadec}.
    \end{Proposition}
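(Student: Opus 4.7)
The plan is to use the identity, quoted in the text just before the proposition, that a unitary conjugation of the state is equivalent at the level of expectation values to the reverse conjugation of the observables, together with the fact that on a complete set of local observables this conjugation is implemented by an orthogonal matrix. Concretely, given the pre-unitary state $\varrho_{\textsc{btn}}$, I would first rewrite
\begin{equation*}
\Gamma_{\textsc{utn}} = \Gamma(\{A_i, B_j, C_k\}, \varrho_{\textsc{utn}}) = \Gamma(\{\tilde{A}_i, \tilde{B}_j, \tilde{C}_k\}, \varrho_{\textsc{btn}}),
\end{equation*}
where $\tilde{A}_i = U_A^\dagger A_i U_A$ and analogously for $B$ and $C$. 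This step is immediate from the cyclicity of the trace inside every entry $\mean{M_m M_n} - \mean{M_m}\mean{M_n}$ of the covariance matrix, and it already exhibits the BTN state claimed by the proposition as the pre-image of $\varrho_{\textsc{utn}}$ under the local unitaries.

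Second, I would argue that on a complete orthogonal set $\{A_i\}$ of local observables with $\tr(A_i A_j) = d\,\delta_{ij}$, the map $A_i \mapsto \tilde{A}_i$ is an orthogonal change of basis. Unitary conjugation preserves Hermiticity and the Hilbert--Schmidt inner product, so $\tilde{A}_i = \sum_k [O_A]_{ki}\, A_k$ with $O_A^T O_A = \id$; this is precisely the statement cited from Ref.\@ \cite{gittsovich2008}. Applying the same reasoning on $B$ and $C$ produces orthogonal matrices $O_B$ and $O_C$. Substituting the linear expansions into the CM entries and collecting blocks then yields $\Gamma_{\textsc{utn}} = O^T \Gamma_{\textsc{btn}} O$, with $\Gamma_{\textsc{btn}}$ admitting the decomposition of Eq.\@ \eqref{eq-gammadec} by Proposition \ref{prop:cmofbtn}.

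The key observation making $O$ block-diagonal is that $U_A$ acts as the identity on systems $B$ and $C$, so conjugation sends observables supported on $A$ only to observables supported on $A$ only. Hence the expansion of $\tilde{A}_i$ has vanishing coefficients on the $B$- and $C$-blocks of the local basis, and symmetrically for $\tilde{B}_j$ and $\tilde{C}_k$. The change-of-basis matrix therefore takes the announced form $O = O_A \oplus O_B \oplus O_C$.

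The only mildly delicate point, and likely the main obstacle, is the completeness assumption on $\{A_i\}$, $\{B_j\}$, $\{C_k\}$: if the chosen local observables do not span the full local operator space, $\tilde{A}_i$ need not lie in their span and the orthogonal-transformation statement fails. I would handle this either by assuming complete local sets, in line with the setting of Proposition \ref{prop:decofgamma}, or by first extending each local set to a complete orthogonal basis, performing the orthogonal change of coordinates at that level, and then restricting back to the original indices; the block-diagonal locality argument survives this extension unchanged.
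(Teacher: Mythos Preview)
Your approach is the same as the paper's: the paper simply quotes the identity $\Gamma(\{M_i\}, U\varrho U^\dagger) = \Gamma(\{U^\dagger M_i U\}, \varrho) = O^T \Gamma(\{M_i\}, \varrho)\,O$ from Ref.~\cite{gittsovich2008} and states the proposition as an immediate consequence, so your expansion of the argument (cyclicity of the trace, orthogonality of the change of basis under unitary conjugation, locality of $U_A,U_B,U_C$ forcing block-diagonality of $O$) is exactly the intended reasoning.

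Your caution about completeness is well-placed and in fact sharper than the paper, which glosses over it: the orthogonal-transformation identity genuinely requires that each local observable set span its full operator space, since otherwise $U_A^\dagger A_i U_A$ need not lie in $\mathrm{span}\{A_i\}$ and no orthogonal $O_A$ exists. One small caveat on your proposed fix: the extend-then-restrict remedy does not literally recover an orthogonal $O$ on the restricted index set, because taking a principal sub-block of $(O')^T \Gamma'_{\textsc{btn}} O'$ does not in general have the form $O^T \Gamma_{\textsc{btn}} O$ with $O$ square orthogonal. So for non-complete sets one should either read the proposition under the completeness hypothesis (as in Propositions~\ref{prop:decofgamma} and~\ref{positivitycdt}) or be content with the weaker block decomposition of Eq.~\eqref{eq:blocdecUTN}, which is all that is actually used downstream.
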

    Thus, the \textsc{cm} of $\varrho_{\textsc{utn}}$ can always be decomposed as a sum of positive semi-definite matrices with the following block decomposition
    \begin{equation} \label{eq:blocdecUTN}
        \Gamma_{\textsc{utn}} =
        \underbrace{
            \begin{pmatrix}
            \textcolor{rednet}{\blacksquare} & \textcolor{rednet}{\blacksquare} & 0  \\
            \textcolor{rednet}{\blacksquare} & \textcolor{rednet}{\blacksquare} & 0 \\
            0 & 0 & 0 \\
            \end{pmatrix}}_{O^T T_c O} + 
        \underbrace{
            \begin{pmatrix}
            \textcolor{bluenet}{\blacksquare} &  0  & \textcolor{bluenet}{\blacksquare} \\
            0&  0  & 0  \\
            \textcolor{bluenet}{\blacksquare} & 0  & \textcolor{bluenet}{\blacksquare} \\
            \end{pmatrix}}_{O^T T_b O}  + 
        \underbrace{
            \begin{pmatrix}
            0& 0  &  0 \\
            0& \textcolor{orangenet}{\blacksquare} & \textcolor{orangenet}{\blacksquare} \\
            0& \textcolor{orangenet}{\blacksquare} & \textcolor{orangenet}{\blacksquare} \\
            \end{pmatrix}}_{O^T T_a O}  +
        \underbrace{
            \begin{pmatrix}
            \blacksquare & 0  &  0 \\
            0& \blacksquare &0   \\
            0&  0 & \blacksquare \\
            \end{pmatrix}}_{O^T R O} . 
    \end{equation}
    We may also look at this situation by noticing that the \textsc{cm}s of \textsc{utn} states can be written as 
    \begin{equation}
        \begin{split}
            \Gamma_{\textsc{utn}} &= \Gamma(\{U_A^\dagger A_i U_A,U_B^\dagger B_jU_B,U_C^\dagger C_kU_C\}, \varrho_{\textsc{btn}})\\
            &= T_c^{U} + T_b^{U} + T_a^{U} + R^{U},
        \end{split}
    \end{equation}
    with $T_c^U$ being the \textsc{cm}s of $\varrho_{\textsc{btn}}^{(A_2B_1)}$ with the following reduced observables 
    \begin{equation}
        \begin{split}
            A_{U,i}^{(2)}\equiv&(U_A^\dagger A_i U_A)^{(2)}  \\
            =& \sum_{\alpha,\beta} \tr(U_A^\dagger A_i U_A \  G_\alpha\otimes G_\beta) \tr(G_\alpha \varrho_{A_1})G_\beta
        \end{split}
    \end{equation}
    and $B_{U,j}^{(1)}$ build in the same way. The matrices $T_b^U$ and $T_a^U$ are defined similarly. The matrix $R_A^U$ is equal to $A^U-E_A^U-F_A^U$. One issue with this formulation is that, if one wants to test whether a given state is compatible with the \textsc{utn} scenario, the unitaries $U_A$, $U_B$ and $U_C$ and the state $\varrho_{\textsc{btn}}$ corresponding to the decomposition of $\varrho_{\textsc{utn}}$ are in general not known, thus there is no way to explicitly know the reduced observables.

    We are now interested in triangle networks in which the local operations can be any quantum channel, i.e., not longer restricted to unitary operations. By making use of the Stinespring dilation theorem \cite{heinosaari2012}, we can show that \textsc{ctn} states also lead to \textsc{cm}s that posses a block decomposition. As a matter of fact, any channel can be implemented by performing a unitary transformation on the system together with an ancilla, and then tracing out the ancilla. The covariance matrix of any state $\varrho$ after a channel $\mathcal{E}$ (i.e., $\mathcal{E}(\varrho)$) with observables $\{M_i\}$ therefore has the same expression as taking the \textsc{cm} of the state together with an ancilla and applying the corresponding unitary $U$, that is, $U(\varrho \otimes \varrho_\text{ancilla})U ^\dagger$, with observables $\{M_i \otimes \id_{\text{ancilla}}\}$. {We can also see this by noticing that the \textsc{cm} of a reduced state is just a principal submatrix of the \textsc{cm} of the global state.} Applying this to each node of the triangle network, we obtain the following proposition:
    \begin{Proposition}[Block decomposition for \textsc{cm}s of \textsc{ctn} states] \label{prop:ctnstates}
        Let $\Gamma_{\textsc{ctn}}$ be the covariance matrix of $\varrho_{\textsc{ctn}} = \mathcal{E}_A \otimes \mathcal{E}_B \otimes \mathcal{E}_C (\varrho_{\textsc{btn}})$ with local observables $A_i$, $B_j$ and $C_k$ as in Eq.\@ \eqref{eq:cm_block_str}. There exist matrices $\Upsilon^X_i$ ($X=A,B,C$ and $i=1,2$) such that 
        \begin{equation} \label{eq:GammaUncorrTN}
            \Gamma_\textsc{ctn} 
        			= \underbrace{\begin{pmatrix}
        						{\Upsilon^A_2} & \gamma_E & 0 \\
        						\gamma_E^T & {\Upsilon^B_1} & 0 \\
        						0 & 0 & 0
        					\end{pmatrix}}_{\textsc{psd}}
        				+	\underbrace{\begin{pmatrix}
        						{\Upsilon^A_1} & 0 & \gamma_F \\
        						0 & 0 & 0 \\
        						\gamma_F^T & 0 & {\Upsilon^C_2}
        					\end{pmatrix}}_{\textsc{psd}}
        				+	\underbrace{\begin{pmatrix}
        						0 & 0 & 0 \\
        						0 & {\Upsilon^B_2} & \gamma_G \\
        						0 & \gamma_G^T & {\Upsilon^C_1}
        					\end{pmatrix}}_{\textsc{psd}}.
        \end{equation}
    \end{Proposition}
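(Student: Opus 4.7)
My plan is to reduce the \textsc{ctn} case to the \textsc{utn} case already treated in Proposition \ref{prop:GammaDecUTN}, via the Stinespring dilation theorem. Each local channel can be written as $\mathcal{E}_X(\sigma) = \tr_{X'}[U_X(\sigma \otimes \kb{0}_{X'})U_X^\dagger]$ for some ancilla $X'$, a unitary $U_X$ on $XX'$ and a fixed pure ancilla state. The pure ancilla factor $\kb{0}_{X'}$ can be absorbed into one of the two bipartite sources that already supplies a subsystem to party $X$ --- for instance, $\varrho_b \mapsto \varrho_b \otimes \kb{0}_{A'}$ with $A_1A'$ now viewed as Alice's subsystem from source $b$ --- so that the pre-unitary global state is a genuine \textsc{btn} state $\tilde{\varrho}_\textsc{btn}$ on the enlarged party systems $\tilde{X} = XX'$. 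Applying the local unitaries yields a \textsc{utn} state $\tilde{\varrho}_\textsc{utn}$ whose partial trace over the ancillas equals $\varrho_\textsc{ctn}$.

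The key observation is that expectation values of the trivially extended observables on $\tilde{\varrho}_\textsc{utn}$ agree with those of the original observables on $\varrho_\textsc{ctn}$, so
\begin{equation}
    \Gamma_\textsc{ctn} = \Gamma(\{A_i\otimes\id_{A'},\ B_j\otimes\id_{B'},\ C_k\otimes\id_{C'}\},\ \tilde{\varrho}_\textsc{utn}).
\end{equation}
Equivalently, $\Gamma_\textsc{ctn}$ is a principal submatrix of the full \textsc{cm} of $\tilde{\varrho}_\textsc{utn}$ and thus inherits any \textsc{psd} block decomposition from it. Applying Proposition \ref{prop:GammaDecUTN} to the right-hand side then yields $\Gamma_\textsc{ctn} = T_c^U + T_b^U + T_a^U + R^U$, a sum of four \textsc{psd} matrices with the sparsity pattern of Eq.~\eqref{eq:blocdecUTN}. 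Since each off-diagonal position of $\Gamma_\textsc{ctn}$ is carried by exactly one of $T_c^U, T_b^U, T_a^U$ while $R^U$ is block diagonal, the off-diagonal blocks of those three summands are forced to equal $\gamma_E$, $\gamma_F$, and $\gamma_G$, respectively.

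To collapse this four-summand decomposition to the three-summand form of Eq.~\eqref{eq:GammaUncorrTN} I would distribute the block-diagonal \textsc{psd} matrix $R^U = R_A^U \oplus R_B^U \oplus R_C^U$ into the three $T^U$ summands: each party block $X$ appears on the diagonal of exactly two of them, so I split $R_X^U$ as a sum of two \textsc{psd} pieces (e.g.\ $\tfrac{1}{2}R_X^U + \tfrac{1}{2}R_X^U$) and add one piece to each of those two summands. The resulting three matrices remain \textsc{psd}, keep the required sparsity pattern, and sum to $\Gamma_\textsc{ctn}$; the $\Upsilon^X_i$ are then read off from their diagonal blocks. The main subtlety I foresee is the first step --- absorbing each ancilla into an existing bipartite source so that the pre-unitary dilated state is genuinely a \textsc{btn} state on enlarged subsystems --- because only then is Proposition \ref{prop:GammaDecUTN} directly applicable without having to reprove it in the dilated setting.
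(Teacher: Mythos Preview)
Your proposal is correct and follows essentially the same route as the paper: Stinespring-dilate each local channel, observe that $\Gamma_\textsc{ctn}$ coincides with the \textsc{cm} of the dilated \textsc{utn} state under the trivially extended observables (equivalently, is a principal submatrix of the full dilated \textsc{cm}), apply Proposition~\ref{prop:GammaDecUTN}, and then distribute the block-diagonal remainder $R^U$ into the three $T^U$ summands. Your explicit step of absorbing each ancilla into one of the existing bipartite sources so that the pre-unitary dilated state is genuinely a \textsc{btn} state is a point the paper leaves implicit, and your $\tfrac{1}{2}R_X^U+\tfrac{1}{2}R_X^U$ split is a concrete instance of what the paper simply describes as ``distributed the black blocks to the first three matrices.''
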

    Comparing to Eq.\@ \eqref{eq:blocdecUTN}, we consider that we distributed the black blocks to the first three matrices. Although the proof techniques differ notably, the block decomposition has already been presented in the first work on \textsc{cm}s of network states \cite{aberg2020}. As demonstrated in that same work, Proposition \ref{prop:ctnstates} can be evaluated as an \textsc{sdp}. However, we are seeking practical analytical methods and criteria to determine if a state cannot be prepared in a network setting: In the next section, we present such a criterion that follows from Proposition \ref{prop:ctnstates}.

    Let us here briefly comment on how this proposition behaves for \textsc{losr} triangle network states, $\varrho_{\Delta} = \sum_\lambda p_\lambda \varrho_{\textsc{ctn}}^{\lambda}$. {We recall that in this set up, the sources and the local operations may be coordinated by a classical random variable $\lambda$.} From the concavity property in Ref.\@ \cite{gittsovich2008}, we know that the difference between the covariance matrix $\Gamma(\varrho_\Delta)$ and the weighted sum of \textsc{cm}s $\sum_\lambda p_\lambda \Gamma(\varrho_\textsc{ctn}^\lambda)$ is a \textsc{psd} matrix. Using Proposition \ref{prop:ctnstates}, we directly get that there exist block matrices such that
    \begin{equation}
        \Gamma(\varrho_\Delta) \geq
        \underbrace{
            \begin{pmatrix}
            \textcolor{rednet}{\blacksquare} & \textcolor{rednet}{\blacksquare} & 0  \\
            \textcolor{rednet}{\blacksquare} & \textcolor{rednet}{\blacksquare} &  0 \\
            0& 0  &  0 \\
            \end{pmatrix}}_{\textsc{psd}} + 
        \underbrace{
            \begin{pmatrix}
            \textcolor{bluenet}{\blacksquare} & 0  & \textcolor{bluenet}{\blacksquare} \\
            0&  0 & 0  \\
            \textcolor{bluenet}{\blacksquare} & 0  & \textcolor{bluenet}{\blacksquare} \\
            \end{pmatrix}}_{\textsc{psd}}  + 
        \underbrace{
            \begin{pmatrix}
            0& 0  &0   \\
            0& \textcolor{orangenet}{\blacksquare} & \textcolor{orangenet}{\blacksquare} \\
            0& \textcolor{orangenet}{\blacksquare} & \textcolor{orangenet}{\blacksquare} \\
            \end{pmatrix}}_{\textsc{psd}}.
    \end{equation}
    While a similar trick can lead to powerful necessary criteria for separability in the case of entanglement \cite{gittsovich2008}, it is not the case here. This is because the extreme points in the case of \textsc{losr} triangle network states are not well characterised, as already discussed in Supplementary Note 1 of Ref.\@ \cite{hansenne2022}. Indeed, while it is known that the extreme points are of the form $\mathcal{E}_A \otimes \mathcal{E}_B \otimes \mathcal{E}_C (\ketbra{c} \otimes \ketbra{b} \otimes \ketbra{a})$, it is not clear whether they are necessarily pure: On the one hand, the author of Ref.\@ \cite{luo2021} showed that no three-qubit genuine multipartite entangled (\textsc{gme}) pure state can be generated in a triangle network, and on the other hand, the authors of Ref.\@ \cite{navascues2020} managed to find a state in the \textsc{losr} triangle network with that has a fidelity to the \textsc{ghz} state of 0.5170. From Ref.\@ \cite{otfried2010}, we know that states with such fidelity must be \textsc{gme}: Thus, there exist extremal points of the set of three-qubit \textsc{losr} network states that are \textsc{gme} mixed states. Finally, we notice that pure \textsc{gme} states can exist in higher-dimensional triangle networks: For instance, the three-ququart state $\ket{\phi^+}_{A_2B_1} \otimes \ket{\phi^+}_{B_2C_1} \otimes \ket{\phi^+}_{C_2A_1}$ is \textsc{gme} for the partition $A_1A_2 \mid B_1B_2 \mid C_1C_2$ \cite{contreras2022}.

    When additional properties of the states are known, such as the purity or the rank, \textsc{sdp}-based criteria for \textsc{losr} networks can be obtained, as shown in Ref.\@ \cite{zp2022}.

\section{Covariance matrix criterion for triangle network states} \label{sec:critfortiranle}
    As seen in the previous section, \textsc{cm}s of \textsc{ctn} states with local observables $\{A_i, B_j, C_k\}$ possess a block decomposition. From Proposition \ref{prop:ctnstates}, {} we obtain inequalities for any unitarily invariant norm $\norm{\cdot}$,
    \begin{equation}
        2 \norm{\gamma_E} \leq \norm{A_2} + \norm{B_1},
    \end{equation}
    for which we can take the trace norm and obtain
    \begin{equation}
        2 \norm{\gamma_E}_{\tr} + 2 \norm{\gamma_F}_{\tr} + 2 \norm{\gamma_G}_{\tr} \leq \tr(A_1 + A_2 + B_1 + B_2 + C_1 + C_2).
    \end{equation}
    This gives us a direct necessary criterion for triangle network states.
    \begin{Proposition}[\textsc{Cm} and trace norm criterion for triangle network states.] \label{prop:tracenormcrit}
        Let $\Gamma$ be the \textsc{cm} of a triangle network state $\varrho_\textsc{ctn} = \mathcal{E}_A  \otimes \mathcal{E}_B \otimes \mathcal{E}_C (\varrho_\textsc{btn})$ with local observables $\{A_i, B_j, C_k\}$. Then
        \begin{equation}
            \tr(\Gamma) \geq 2 \norm{\gamma_E}_{\tr} + 2 \norm{\gamma_F}_{\tr} + 2 \norm{\gamma_G}_{\tr} 
        \end{equation}
        has to hold, with $\gamma_E$, $\gamma_F$ and $\gamma_G$ as in Eq.\@ \eqref{eq:cm_block_str}.
    \end{Proposition}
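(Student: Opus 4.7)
The plan is to derive the criterion as an immediate algebraic consequence of the block decomposition given in Proposition \ref{prop:ctnstates}. The only non-trivial ingredient is the standard matrix-analytic fact that whenever a Hermitian block matrix $\bigl(\begin{smallmatrix} X & Y \\ Y^T & Z \end{smallmatrix}\bigr)$ is positive semi-definite, the inequality $2\norm{Y} \leq \norm{X} + \norm{Z}$ holds for every unitarily invariant norm $\norm{\cdot}$. This is a classical consequence of the factorisation $Y = X^{1/2} K Z^{1/2}$ with $K$ a contraction, combined with the arithmetic-geometric mean inequality for singular values; I would quote it rather than reprove it.

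Applying this lemma to each of the three PSD summands in Eq.\@ \eqref{eq:GammaUncorrTN} gives
\begin{align*}
2 \norm{\gamma_E} & \leq \norm{\Upsilon^A_2} + \norm{\Upsilon^B_1},\\
2 \norm{\gamma_F} & \leq \norm{\Upsilon^A_1} + \norm{\Upsilon^C_2},\\
2 \norm{\gamma_G} & \leq \norm{\Upsilon^B_2} + \norm{\Upsilon^C_1}.
\end{align*}
Since each $\Upsilon^X_i$ is a principal submatrix of a PSD matrix and is therefore itself PSD, specialising to the trace norm and using $\norm{\Upsilon^X_i}_\tr = \tr(\Upsilon^X_i)$ yields, after summing the three inequalities,
\[ 2\bigl(\norm{\gamma_E}_\tr + \norm{\gamma_F}_\tr + \norm{\gamma_G}_\tr\bigr) \leq \sum_{X,i} \tr(\Upsilon^X_i). \]

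To conclude, I would identify the right-hand side with $\tr(\Gamma)$. Reading off the diagonal blocks of Eq.\@ \eqref{eq:GammaUncorrTN} and comparing with the general form \eqref{eq:cm_block_str}, one obtains $\Gamma_X = \Upsilon^X_1 + \Upsilon^X_2$ for $X=A,B,C$, so $\tr(\Gamma) = \sum_X \tr(\Gamma_X) = \sum_{X,i} \tr(\Upsilon^X_i)$. Combining this identity with the preceding bound yields the claim.

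There is no genuine obstacle: the proposition is essentially a bookkeeping consequence of Proposition \ref{prop:ctnstates}. The only external input is the block-PSD norm inequality, which is classical and can simply be cited.
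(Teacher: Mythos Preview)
Your proposal is correct and follows essentially the same approach as the paper: both derive the three norm inequalities $2\norm{\gamma_E} \leq \norm{\Upsilon^A_2} + \norm{\Upsilon^B_1}$ etc.\ from the \textsc{psd} block decomposition of Proposition~\ref{prop:ctnstates}, specialise to the trace norm, and sum. Your write-up is in fact slightly more explicit than the paper's, which leaves the identification $\tr(\Gamma) = \sum_{X,i} \tr(\Upsilon^X_i)$ implicit.
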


    Now, we apply the trace norm criterion to exclude states from the triangle network scenario. 

    First, we notice that contrarily to \textsc{btn} states, three-qubit states can be generated in the \textsc{ctn} scenario. Therefore, we first consider the three-qubit \textsc{ghz} state that we mix with white noise, i.e., 
    \begin{equation}
        \varrho_{GHZ}(v) = v \kb{GHZ} + (1-v) \frac{\id_8}{8}.   
    \end{equation}
    By taking the three-qubit observable set $\mathcal{S}_{GHZ}=\{\sigma_z \id \id, \id  \sigma_z  \id, \id  \id  \sigma_z\}$ for the \textsc{cm}, the \textsc{cm} and trance norm criterion excludes $\varrho_{GHZ}(v)$ for $v>\nicefrac{1}{2}$. The \textsc{cm} of the \textsc{w} state $\nicefrac{1}{\sqrt{3}}(\kt{100} + \kt{010} + \kt{001})$ with observables $\mathcal{S}_W=\{\sigma_x  \id  \id, \sigma_y  \id  \id, \id  \sigma_x  \id, \id  \sigma_y  \id, \id  \id  \sigma_x, \id  \id  \sigma_y\}$ excludes it for $v> \nicefrac{3}{4}$ by the same method. Note that we omitted the tensor product signs for readability.

    Further than that, we can put a bound on the fidelity of a triangle state to the \textsc{ghz} state. Consider an arbitrary state $\varrho = F \kb{GHZ} + (1-F) \tilde\varrho$, where $\br{GHZ}\tilde\varrho\kt{GHZ}=0$. 
    From Proposition \ref{prop:tracenormcrit} with $\mathcal{S}_{GHZ}$, we show that $F$ cannot be larger than $3-\sqrt{5}\simeq 0.76$. We note that this result was already obtained in Ref.\@ \cite{hansenne2022} using similar methods, and that by exploiting symmetries, this upper bound on the fidelity can be improved to $\nicefrac{1}{\sqrt{2}}\simeq 0.71$ {which is, to our knowledge, the best analytical bound so far.} 

   It is worth realising that upper bounds on the fidelity of triangle states to a given target state $\kt{\Psi}$ also hold in the case of \textsc{losr} networks. Indeed, \textsc{losr} network states $\varrho_\Delta$ are states that can be written as a convex combination of \textsc{ctn} states, and thus $\max_{\varrho_\Delta} \br{\Psi}\varrho_\Delta\kt{\Psi} = \max_{\varrho_{\textsc{ctn}}} \br{\Psi}\varrho_{\textsc{ctn}}\kt{\Psi}$. From this, we can conclude that from Proposition \ref{prop:tracenormcrit} it follows that any state with a fidelity to the \textsc{ghz} state higher than $3-\sqrt{5} \simeq 0.71$ is excluded also from the \textsc{losr} triangle network scenario.

   {Before closing this section, a brief comment is in order. At first glance, it may seem that by only using the $\sigma_z$ correlations of a three-qubit state, we could exclude it from the set of \textsc{losr} network states and thus learn about its entanglement. However, this would be problematic because all separable three-qubit states are in the set of \textsc{losr} triangle network states, and all $\sigma_z$ correlations can be simulated by separable states. But this is not the logic
   of the argument above: Proposition \ref{prop:tracenormcrit} puts a bound on the extremal points
   of  \textsc{losr} triangle network states, which then by convexity holds for all 
   \textsc{losr} states. In order to draw a conclusion for a given state, knowledge of the fidelity 
   to some target state is required, which requires additional measurements than the $\sigma_z$ correlations alone. 
   }

\section{\textsc{Ncds} networks} \label{sec:ncdsnet}

    In this section, we show that the block decomposition of covariance matrices of network states can also hold for larger networks. Indeed, if we consider networks where two nodes share parties from at most one common source (\textsc{ncds} networks), the triangle network results can be extended. Examples of such networks are networks with bipartite sources. 
    
    More explicitly, consider a $N$-node \textsc{ncds} network with a set of sources $\mathds{S}$. The number of sources is given by $\abs{\mathds{S}}$, and each source $s\in \mathds{S}$ is the set of nodes the source connects.
    Let $\Gamma_\textsc{ncds}$ be the \textsc{cm} of a global state of such a network with observables $\{A_{x \mid i} : x = 1 , \dots, N\}$, where $A_{x \mid i}$ is the $i$th observables that only acts on the node $x$. Then $\Gamma_\textsc{ncds}$ has a block form analogous to Eq.\@ \eqref{eq:cm_block_str}, where the diagonal blocks are labelled $\Gamma_x$ and the off-diagonal block are $\gamma_{xy}=\gamma_{yx}^T$ ($x \neq y$, $x,y \in \{1, \dots,N\}$).
    Formally, we state that
    \begin{Proposition}[Block decomposition for \textsc{cm}s of \textsc{ncds} network states] \label{prop:BlockDecNCDS}
        There exist matrices $\Upsilon_x^s$ ($s \in \mathds{S}$, $x \in s$) such that $\Gamma_\textsc{ncds}$ can be decomposed as a sum of $\abs{\mathds{S}}$ positive semi-definite block matrices $T_s$ ($s \in \mathds{S}$) where the off-diagonal blocks of each $T_s$ are $\gamma_{xy}$ for $\{x,y\} \subset s$ and $0$ for $\{x,y\} \not\subset s$, and where the diagonal blocks are $\Upsilon_x^s$.
    \end{Proposition}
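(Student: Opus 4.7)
The plan is to extend the argument of Propositions~\ref{prop:cmofbtn} and~\ref{prop:ctnstates} from the triangle network to a general \textsc{ncds} network. First, as in the proof of Proposition~\ref{prop:ctnstates}, I would use the Stinespring dilation to absorb any local channels at the nodes into unitaries acting on enlarged systems paired with ancillas; since adding a private ancilla to a node preserves the \textsc{ncds} property (each ancilla is a degenerate source of arity one), it suffices to treat the basic case in which every node simply collects subsystems from its incident sources. The matrix $\Gamma_\textsc{ncds}$ is then a principal submatrix of the covariance matrix of the dilated basic state, and a block decomposition of the latter induces one on $\Gamma_\textsc{ncds}$.

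For the basic \textsc{ncds} case, I would mimic the reduced-observable construction of Eq.~\eqref{eq:redobs}. Each node $x$ holds $\bigotimes_{s\ni x} x^{(s)}$, where $x^{(s)}$ denotes the subsystem sent to $x$ by source $s$, and the global state factorises as $\bigotimes_{s\in \mathds{S}} \varrho_s$ (up to reordering), so the marginal at $x$ is the product $\bigotimes_{s\ni x} \varrho_s^{(x^{(s)})}$. For each local observable $A_{x \mid i}$ and each $s\ni x$ I define the reduced observable $A_{x\mid i}^{(s)}$ on $x^{(s)}$ by tracing out $x$'s remaining subsystems against their marginals, exactly as in Eq.~\eqref{eq:redobs}. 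The \textsc{ncds} hypothesis makes the attribution clean: for any two distinct nodes $x,y$ the joint marginal $\varrho^{(xy)}$ factorises across the at-most-one common source, so $\gamma_{xy}$ vanishes unless $x$ and $y$ share some source $s$, in which case the analogue of Eq.~\eqref{eq:gammaEwithredobs} gives
\begin{equation}
    [\gamma_{xy}]_{ij} = \langle A_{x\mid i}^{(s)} \otimes A_{y\mid j}^{(s)} \rangle_{\varrho_s} - \langle A_{x\mid i}^{(s)} \rangle_{\varrho_s} \langle A_{y\mid j}^{(s)} \rangle_{\varrho_s}.
\end{equation}

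I would then define $T_s$ as the full covariance matrix $\Gamma(\{A_{x\mid i}^{(s)}: x\in s,\, i\}, \varrho_s)$, padded with zero rows and columns at every node $x\notin s$. Each $T_s$ is a bona fide covariance matrix and therefore positive semidefinite, and its off-diagonal blocks are $\gamma_{xy}$ when $\{x,y\}\subset s$ and zero otherwise, matching the statement. Summing the $T_s$'s recovers the off-diagonal blocks of $\Gamma_\textsc{ncds}$ exactly once, but the diagonal block at node $x$ only equals $\sum_{s\ni x}\Gamma_{x^{(s)}}$, leaving a node-local discrepancy $R_x:=\Gamma_x-\sum_{s\ni x}\Gamma_{x^{(s)}}$ to be absorbed. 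Picking an arbitrary source $s_0(x)\ni x$ for each node, I set $\Upsilon_x^{s_0(x)}:=\Gamma_{x^{(s_0(x))}}+R_x$ and $\Upsilon_x^{s}:=\Gamma_{x^{(s)}}$ for the other sources $s\ni x$; provided $R_x\succeq 0$, adding it to a single diagonal block of the already-PSD matrix $T_{s_0(x)}$ preserves positivity.

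The main obstacle is therefore to show that $R_x$ is positive semidefinite for a node incident to arbitrarily many sources, generalising the two-source identity $R_A=\Gamma_A-\Gamma_{A_1}-\Gamma_{A_2}\succeq 0$ established in Appendix~\ref{app:prop1} and made explicit in Eq.~\eqref{eq:Rexplicitelements}. I expect this to follow by induction on the cardinality of $\{s:s\ni x\}$: at each step one groups the subsystems $x^{(s)}$ into two factors and applies the triangle-case lemma, writing $\langle\psi|R_x|\psi\rangle$ as the trace of a product of PSD operators built from the factor marginals $\varrho_s^{(x^{(s)})}$. For complete local orthogonal observables the iteration collapses to the Kronecker product of the per-source covariance matrices $\bigotimes_{s\ni x}\Gamma(\{G_\alpha\},\varrho_s^{(x^{(s)})})$, which is manifestly PSD and provides a useful sanity check consistent with Eq.~\eqref{eq:Rexplicitelements}.
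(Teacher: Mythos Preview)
Your proposal is correct and reaches the same conclusion, but the route differs from the paper's. The paper does not attack $R_x\succeq 0$ for arbitrary local observables directly; instead it first restricts to \emph{product} observables $A_{x\mid i}=A_{x^1\mid i}\otimes\cdots\otimes A_{x^{n_x}\mid i}$, for which Lemma~\ref{lemma} yields the explicit expression
\[
R_x=\bigotimes_{\alpha}\bigl(\kb{\vec{x}_\alpha}+\Gamma(\{A_{x^\alpha\mid i_\alpha}\},\varrho^{(x^\alpha)})\bigr)-\bigotimes_{\alpha}\kb{\vec{x}_\alpha}-\sum_{\alpha}\Gamma(\{A_{x^\alpha\mid i_\alpha}\},\varrho^{(x^\alpha)})\bigotimes_{\beta\neq\alpha}\kb{\vec{x}_\beta},
\]
which is manifestly \textsc{psd} as a sum of tensor products of \textsc{psd} matrices. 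It then invokes Lemma~\ref{lemma:a3} (conjugation $\Gamma\mapsto C^T\Gamma C$ under a linear change of observables) to transport the block decomposition from a complete product-observable set to any local observable set, and finally appeals to the Stinespring argument exactly as you do. Your approach bypasses the product-observable detour and the change-of-basis lemma by running the two-factor inequality of Appendix~\ref{app:prop1} inductively on the number of sources at $x$; the key consistency check $(A_{x\mid i}^{(2\ldots n)})^{(s)}=A_{x\mid i}^{(s)}$ does hold because partial traces against a product state compose. Your argument is more direct and stays at the level of general observables throughout; the paper's argument, in exchange, delivers an explicit closed form for $R_x$ (the multi-factor analogue of Eq.~\eqref{eq:Rexplicitelements}) rather than merely its positivity.
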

    For a technical proof, see Appendix \ref{app:prop7}. In there, we prove that in the case of basic (i.e., without local operations) networks with no common double source (\textsc{bncds} networks), the proposition holds. Following a similar line of reasoning to the proofs for triangle networks, the proposition naturally extends to \textsc{ncds} networks with local operations.

    Let us consider an easy example for the sake of clarity. Figure \ref{fig:ncds} shows a five-partite network consisting of two tripartite sources $\varrho_a$ and $\varrho_b$, and one bipartite $\varrho_c$. The set of sources is given by $\mathds{S}=\{a,b,c\}=\{\{1,2,3\},\{3,4,5\},\{1,5\}\}$.
    \begin{figure}
        \centering
        \includegraphics[height=4.5cm]{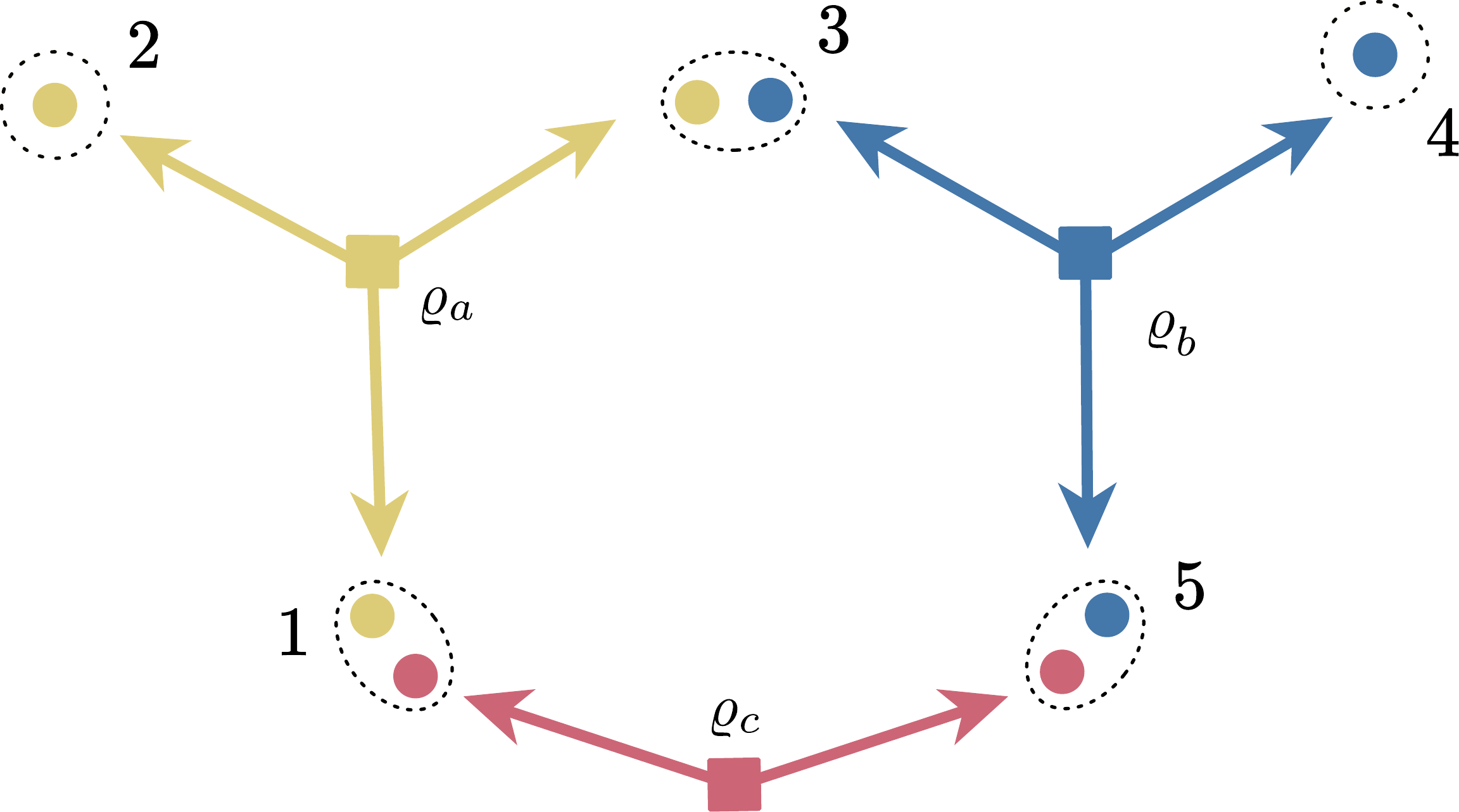}
        \caption{Five-partite network with two tripartite sources $\varrho_a$ and $\varrho_b$, and one bipartite source $\varrho_c$. The parties $1$, $2$, $3$, $4$ and $5$ may perform a local channel $\mathcal{E}_i$ on their system $i$ ($i=1,\dots,5$).}
        \label{fig:ncds}
    \end{figure}
    Following the notation of Proposition \ref{prop:BlockDecNCDS}, there must exist eight matrices $\Upsilon_1^a$, $\Upsilon_2^a$, $\Upsilon_3^a$, $\Upsilon_3^b$, $\Upsilon_4^b$, $\Upsilon_5^b$, $\Upsilon_1^c$ and $\Upsilon_5^c$ such that the \textsc{cm} of the global network state 
    \begin{equation}
        \varrho_{\textsc{ncds}} = \mathcal{E}_1 \otimes \mathcal{E}_2 \otimes \mathcal{E}_3 \otimes \mathcal{E}_4 \otimes \mathcal{E}_5  (\varrho_a \otimes \varrho_b \otimes \varrho_c)
    \end{equation}
    may be decomposed as 
    \begin{equation}
        \Gamma_{\textsc{ncds}} =     
        \underbrace{
            \begin{pmatrix}
            \textcolor{orangenet}{\Upsilon_1^a} & \textcolor{orangenet}{\blacksquare} & \textcolor{orangenet}{\blacksquare} & 0 & 0  \\
            \textcolor{orangenet}{\blacksquare} & \textcolor{orangenet}{\Upsilon_2^a} & \textcolor{orangenet}{\blacksquare} & 0 & 0  \\
            \textcolor{orangenet}{\blacksquare} & \textcolor{orangenet}{\blacksquare} & \textcolor{orangenet}{\Upsilon_3^a} & 0 & 0  \\
            0 & 0  &  0 & 0 & 0 \\
            0 & 0  &  0 & 0 & 0 
            \end{pmatrix}}_{T_a} + 
        \underbrace{
            \begin{pmatrix}
            0 & 0  &  0 & 0 & 0 \\
            0 & 0  &  0 & 0 & 0 \\
            0 & 0 & \textcolor{bluenet}{\Upsilon_3^b} & \textcolor{bluenet}{\blacksquare} & \textcolor{bluenet}{\blacksquare}  \\
            0 & 0 & \textcolor{bluenet}{\blacksquare} & \textcolor{bluenet}{\Upsilon_4^b} & \textcolor{bluenet}{\blacksquare} \\
            0 & 0 & \textcolor{bluenet}{\blacksquare} & \textcolor{bluenet}{\blacksquare} & \textcolor{bluenet}{\Upsilon_5^b}  
            \end{pmatrix}}_{T_b}  + 
        \underbrace{
            \begin{pmatrix}
            \textcolor{rednet}{\Upsilon_1^c} & 0 & 0 & 0 &  \textcolor{rednet}{\blacksquare} \\
            0 & 0 & 0 & 0 & 0 \\
            0 & 0 & 0 & 0 & 0 \\
            0 & 0 & 0 & 0 & 0 \\
            \textcolor{rednet}{\blacksquare} & 0 & 0 & 0 &  \textcolor{rednet}{\Upsilon_5^c}
            \end{pmatrix}}_{T_c},
    \end{equation}
    where the off-diagonal blocks are simply the ones from $\Gamma_{\textsc{ncds}}$.

    We see directly that we can extend Proposition \ref{prop:tracenormcrit} as
    \begin{Proposition}[\textsc{Cm} and trace norm criterion for \textsc{ncds} network states.] \label{prop:TraceForNCDS}
        Let $\Gamma_\textsc{ncds} $ be as above. Then
        \begin{equation}
            \tr(\Gamma_\textsc{ncds}) \geq 2 \sum_{x>y=1}^N \norm{\gamma_{xy}}_{\tr}
        \end{equation}
        has to hold.
    \end{Proposition}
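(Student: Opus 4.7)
The plan is to extend the trace-norm argument of Proposition \ref{prop:tracenormcrit} source by source, combining the block decomposition of Proposition \ref{prop:BlockDecNCDS} with the standard bound on the off-diagonal of a positive semi-definite two-by-two block matrix.

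First, I would invoke Proposition \ref{prop:BlockDecNCDS} to write $\Gamma_\textsc{ncds} = \sum_{s \in \mathds{S}} T_s$ with each $T_s \geq 0$, so that $\tr(\Gamma_\textsc{ncds}) = \sum_s \tr(T_s) = \sum_s \sum_{x \in s}\tr(\Upsilon_x^s)$. For every source $s$ and every pair $\{x,y\}\subset s$, the principal submatrix
\[
\begin{pmatrix}\Upsilon_x^s & \gamma_{xy} \\ \gamma_{xy}^T & \Upsilon_y^s\end{pmatrix}
\]
inherits positive semi-definiteness from $T_s$, which gives $2\norm{\gamma_{xy}}_{\tr} \leq \tr(\Upsilon_x^s) + \tr(\Upsilon_y^s)$. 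This follows by factoring the two-by-two PSD block matrix as $N^\dagger N$ with $N = (X,Y)$, so that $\tr(\Upsilon_x^s) = \norm{X}_F^2$ and $\tr(\Upsilon_y^s) = \norm{Y}_F^2$, and then applying the Cauchy--Schwarz inequality in Hilbert--Schmidt form $\norm{X^\dagger Y}_{\tr}\leq \norm{X}_F\norm{Y}_F$ together with AM--GM.

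The \textsc{ncds} hypothesis is what makes the bookkeeping clean: each unordered pair $\{x,y\}$ lies in at most one source, so each off-diagonal block $\gamma_{xy}$ appears in exactly one $T_s$ and vanishes if no source contains both $x$ and $y$. Summing the pairwise inequality over pairs inside each source and then over sources produces the claimed bound, with the right-hand side collapsing to $\tr(\Gamma_\textsc{ncds})$ through the identity $\Gamma_x = \sum_{s\ni x}\Upsilon_x^s$.

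I expect the two-by-two PSD trace-norm inequality itself to be routine, since it is the ingredient already used in Proposition \ref{prop:tracenormcrit}. The delicate point will be the combinatorial matching between the diagonal contributions $\tr(\Upsilon_x^s)$ and the off-diagonal blocks they control: when every source is bipartite, each $x\in s$ has a unique partner in $s$, the pair sums telescope without over-counting, and the bound follows immediately. For sources of larger cardinality, naive pair-summation produces an overall factor $|s|-1$ on the right-hand side; removing it would require exploiting the full PSD structure of $T_s$ rather than only its two-by-two principal sub-blocks, which I would expect to be the main technical obstacle.
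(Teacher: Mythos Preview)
Your argument is exactly the paper's: the text says only ``we see directly that we can extend Proposition~\ref{prop:tracenormcrit},'' and immediately afterwards remarks that the proof ``only takes into account that the principal submatrices of each $T_s$ are positive semi-definite.'' When every source is bipartite each $T_s$ carries a single off-diagonal block, your matching has no overcounting, and the bound follows just as you wrote.

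Your reservation about sources with $|s|\geq 3$ is justified, but the obstruction is sharper than you anticipate: the full \textsc{psd} structure of $T_s$ cannot remove the factor $|s|-1$. The $k\times k$ all-ones matrix $J_k$ is \textsc{psd}, yet twice the sum of its strictly upper-triangular entries equals $k(k-1)>k=\tr(J_k)$ for $k\geq 3$. This is realised by a genuine \textsc{ncds} network state: take $N=4$ nodes, a tripartite source on $\{1,2,3\}$ emitting a three-qubit \textsc{ghz} state, a bipartite source on $\{3,4\}$ emitting the maximally mixed two-qubit state, identity channels, and the single observable $\sigma_z$ at each node (on the first qubit at node $3$). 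The resulting covariance matrix is $J_3\oplus(1)$, so $\tr(\Gamma)=4<6=2\sum_{x>y}\norm{\gamma_{xy}}_{\tr}$. Hence your pairwise argument proves the inequality for bipartite-source \textsc{ncds} networks, which is all that the paper's method actually yields; the extension to sources of larger cardinality (such as those in Figure~\ref{fig:ncds}) is not a missing technical step but a failure of the stated bound itself.
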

    We note that this criterion does not take network topology into account: It treats a network with a single $N-1$-partite source the same way it treats a line network with $N-1$ bipartite sources. While this is interesting as it can exclude states from all networks, we also expect it to be weaker than criteria designed for specific network topologies. On top of that, Proposition \ref{prop:TraceForNCDS} only takes into account that the principal submatrices of each $T_s$ are positive semi-definite, not that the matrices themselves are \textsc{psd}.
    
    As an example, let us consider a $N$-qubit \textsc{ghz} state, $\ket{GHZ_N} = \nicefrac{1}{\sqrt{2}}(\ket{0}^{\otimes N} + \ket{1}^{\otimes N})$, of visibility $v$ mixed with white noise. As observables, we take $\sigma_z^{(x)}$ for each qubit $x$. The resulting \textsc{cm} will have diagonal elements equal to one, whereas the off-diagonal elements will be $v$. Applying the previous proposition, we exclude $N$-partite \textsc{ghz} states mixed with white noise from any \textsc{ncds} network scenario for 
    \begin{equation}
        v > \frac{1}{N-1}.
    \end{equation}
    With $N=3$, we recover the result of the example for the triangle network. 

    Nevertheless, we are forced to observe that the criterion only considers two-body correlation, therefore cannot fully capture the entanglement in the target states. To see this, let us look at the four-qubit cluster state $\ket{Cl}=\ket{+0+0} + \ket{+0-1} + \ket{-1-0} + \ket{-1+1}$ (up to normalisation). Its generators are $\sigma_x \sigma_z \id \id$, $\sigma_z \sigma_x \sigma_z \id$, $\id \sigma_z \sigma_x \sigma_z$ and $\id \id \sigma_z \sigma_x$, where the only two-body correlations are given by $\sigma_x \sigma_z \id \id$ and $\id \id \sigma_z \sigma_x$. A possible set of observables is $\mathcal{S} = \{\sigma_x^{(1)},\sigma_z^{(2)},\sigma_z^{(3)},\sigma_x^{(4)}\}$ and we obtain
    \begin{equation}
        \Gamma(\mathcal{S}, \ket{Cl}) =
            \begin{pmatrix}
                1 & 1 & 0 & 0 \\
                1 & 1 & 0 & 0 \\
                0 & 0 & 1 & 1 \\
                0 & 0 & 1 & 1 
            \end{pmatrix}.
    \end{equation}
    The trace criterion is satisfied and thus we cannot exclude $\ket{Cl}$ from \textsc{ncds} network scenarios by means of Proposition \ref{prop:TraceForNCDS}. Moreover, we directly see that the matrix has a block decomposition, namely $\begin{pmatrix}
        1 & 1 \\
        1 & 1
    \end{pmatrix} \oplus \begin{pmatrix}
        1 & 1 \\
        1 & 1
    \end{pmatrix}$, which \textit{a priori} could arise from a network with two bipartite sources. However, we know from Ref.\@ \cite{hansenne2022} that the four-qubit cluster state cannot be generated in bipartite networks.

\section{Conclusion}
    In this work, we presented alternative proofs to the block decomposition of covariance matrices for network states. From that, we derived analytical criteria to certify that some states cannot be generated through quantum networks as we define them in Eq.\@ \eqref{eq:ctn}. This means that those excluded states either require global sources that connect all nodes, classical communication, non-local operations, or shared randomness to be generated. Concerning the latter resource, we also showed that Propositions \ref{prop:tracenormcrit} can be used to upper bound the fidelity to some target states that \textsc{losr} network states can have. Furthermore, we stress that our criteria are analytical and computable. 

    Regarding extensions of our work, it would be worthwhile to investigate whether the proof of Proposition \ref{prop:BlockDecNCDS} can be extended to networks beyond \textsc{ncds} networks. As shown in Ref.\@ \cite{aberg2020}, the latter is indeed possible, which implies that Propositions \ref{prop:BlockDecNCDS} and \ref{prop:TraceForNCDS} hold for general networks as well.

    Finally, as the field of network entanglement and its potential applications in quantum information theory continues to grow, it may be valuable to investigate additional avenues for identifying compatible network states. Specifically, an area of interest could be finding sufficient criteria for network states, as current results only provide necessary criteria. By developing such criteria, we may be able to learn more about states that \textit{can} be generated in networks without communication and about their potential usefulness, for instance for quantum conference key agreement. In this context, it is interesting to also consider noisy networks: This would translate to imposing additional conditions on the sources states, e.g.\@ by making them travel through depolarisation channels or by constraining their purity.

\section*{Acknowledgements}
This work was financially supported by the
Deutsche Forschungsgemeinschaft (DFG, German Research
Foundation, Projects No.\@ 447948357 and No.\@ 440958198),
the Sino-German Center for Research Promotion (Project No.\@
M-0294), the German Ministry of Education and Research
(Project QuKuK, BMBF Grant No.\@ 16KIS1618K) and the House of Young Talents of the University of
Siegen.

\appendix

\section[\appendixname~\thesection]{} \label{app:redobs}
    Here we prove that the off-diagonal blocks of the \textsc{cm} of a triangle network state can be expressed using the reduced observables, that is,
    \begin{equation}
        [\gamma_E]_{mn} = \mean{A_m^{(2)} \otimes B_n^{(1)}} - \mean{A_m^{(2)} } \mean{B_n^{(1)}},
    \end{equation} 
    with
    \begin{equation} 
        \begin{split}
            A_m^{(2)}   &= \tr_{A_1} \left(A_m \varrho_{\textsc{btn}}^{(A_1)} \otimes \id_{A_2} \right) 
        \end{split}
    \end{equation}
    and similarly for $B_n^{(1)}$.
    \begin{proof}[Proof of Eq.\@ \eqref{eq:gammaEwithredobs}]
    	Let us decompose $A_m$ and $B_n$ respectively in orthogonal bases $\{G_\alpha\otimes G_\beta\}$ and $\{G_\gamma \otimes G_\delta\}$ satisfying $\tr(G_\alpha G_{\alpha'})=d \delta_{\alpha \alpha'}$ as
    	\begin{equation}
    		A_m = \frac{1}{d^2} \sum_{\alpha,\beta} \tr(G_\alpha \otimes G_\beta A_m) G_\alpha \otimes G_\beta 
    	\end{equation}
    	and
    	\begin{equation}
    		B_n = \frac{1}{d^2} \sum_{\gamma,\delta} \tr(G_\gamma \otimes G_\delta B_n) G_\gamma \otimes G_\delta,
    	\end{equation}
    	and notice that the reduced states of $\varrho_{\textsc{btn}}$ are product states:
    	\begin{align}
    		&\varrho_{\textsc{btn}}^{(AB)} = \varrho_{\textsc{btn}}^{(A_1)} \otimes \varrho_{\textsc{btn}}^{(A_2B_1)} \otimes \varrho_{\textsc{btn}}^{(B_2)}, \label{eq:btnredAB} \\
    		&\varrho_{\textsc{btn}}^{(A)} = \varrho_{\textsc{btn}}^{(A_1)} \otimes \varrho_{\textsc{btn}}^{(A_2)}, \\ 
    		&\varrho_{\textsc{btn}}^{(B)} = \varrho_{\textsc{btn}}^{(B_1)} \otimes \varrho_{\textsc{btn}}^{(B_2)}. \label{eq:btnredB}
    	\end{align}
        Combining this, $[\gamma_E]_{mn}$ straightforwardly decomposes as 
        \begin{equation}
            \mean{A_m^{(2)} \otimes B_n^{(1)}}_{\varrho_{\textsc{btn}}^{(A_2B_1)}} - \mean{A_m^{(2)}}_{\varrho_{\textsc{btn}}^{(A_2)}} \mean{B_n^{(1)}}_{\varrho_{\textsc{btn}}^{(B_1)}},
        \end{equation}
        and the proof is complete.
    \end{proof}

\section[\appendixname~\thesection]{} \label{app:prop1}
    We prove here one of our central results, namely that the \textsc{cm} of a \textsc{btn} state can be decomposed into the sum of \textsc{cm}s with reduced observables, i.e.,
    \begin{customthm}{1}[Block decomposition for \textsc{cm}s of \textsc{btn} states]
    	The \textsc{cm} of a \textsc{btn} state with local observables $\{A_i, B_j,C_k\}$ can be decomposed as
    	\begin{equation}  \label{eq-gammadecAPP}
    	    \begin{split}
        		\Gamma_{\textsc{btn}}   =& \Gamma (\{A_i,B_j,C_k\},\varrho_{\textsc{btn}}) \\
        	                            =&\underbrace{\begin{pmatrix}
        						\Gamma_{A_2} & \gamma_E & 0 \\
        						\gamma_E^T & \Gamma_{B_1} & 0 \\
        						0 & 0 & 0
        					\end{pmatrix}}_{T_c}
        				+	\underbrace{\begin{pmatrix}
        						\Gamma_{A_1} & 0 & \gamma_F \\
        						0 & 0 & 0 \\
        						\gamma_F^T & 0 & \Gamma_{C_2}
        					\end{pmatrix}}_{T_b}
        				+	\underbrace{\begin{pmatrix}
        						0 & 0 & 0 \\
        						0 & \Gamma_{B_2} & \gamma_G \\
        						0 & \gamma_G^T & \Gamma_{C_1}
        					\end{pmatrix}}_{T_a}
        				+ 	\underbrace{\begin{pmatrix}
        						R_A & 0 & 0 \\
        						0 & R_B & 0 \\
        						0 & 0 & R_C
        				\end{pmatrix}}_{R}
    	    \end{split}
    	\end{equation}
    	where the matrices $T_a$, $T_b$ and $T_c$ are \textsc{cm}s for the state-dependent reduced observables, i.e.\@, 
    	\begin{equation}
    		T_c = \Gamma(\{A_i^{(2)} , B_j^{(1)}\}, \varrho_{\textsc{btn}}^{(A_2B_1)}).
    	\end{equation}
    	and analogously for $T_b$ and $T_a$. The matrix $R$ is positive semi-definite. 
    \end{customthm}
    \begin{proof}[Proof of Proposition \ref{prop:cmofbtn}]
    	Following Eq.\@ \eqref{eq-gammadecAPP}, the matrix $R_A$ is given by
    	\begin{equation}\label{eqRa}
    		R_A= \Gamma_A - \Gamma_{A_1} - \Gamma_{A_2},
    	\end{equation}
        where the entries of $\Gamma_{A_2}$ are
    	\begin{equation}
    	[\Gamma_{A_2}]_{mn} = \mean{A_m^{(2)}A_n^{(2)}} - \mean{A_m^{(2)} }\mean{A_n^{(2)}},
    	\end{equation}
    	which is a \textsc{cm} for the reduced observables, evaluated on the state $\varrho_{\textsc{btn}}^{(A_2)}$ only. 
    	Let us now show that such a matrix $R_A$ is positive 
    	semi-definite by showing that $\br{x} R_A \kt{x} \geq 0$ for an 
    	arbitrary complex vector $\kt{x}$. Using the definition 
    	\begin{equation}
    		M = \sum_i x_i A_i
    	\end{equation}
    	and the fact that
    	\begin{equation}
    		M^{(1)} = \sum_i x_i A_i^{(1)} \quad \text{and} \quad M^{(2)} = \sum_i x_i A_i^{(2)},
    	\end{equation}
    	we have that 
            \begin{equation}
            	\begin{split}
            		\br{x} R_A \kt{x} =& 
            		\big(\mean{M^\dagger M} -  \mean{M^\dagger}\mean{M}\big)
            		- \big(\mean{(M^{(1)})^\dagger M^{(1)}}-\mean{M^\dagger}\mean{M} \big) \\
            		&- \big(\mean{(M^{(2)})^\dagger M^{(2)}}- \mean{M^\dagger}\mean{M} \big)
            		\\
            		=& \mean{M^\dagger M} + \mean{M^\dagger}\mean{M} - \mean{(M^{(1)})^\dagger M^{(1)}}
            		-\mean{(M^{(2)})^\dagger M^{(2)}}.
            	\end{split}
            \end{equation}
    	Since $M$ acts on $A_1A_2$, we can use a Schmidt-like decomposition for the bipartition 
    	$A_1 \mid A_2$,
    	\begin{equation}
    		M = \sum_i P_i \otimes Q_i
    	\end{equation}
    	and use the fact that $\varrho_{\textsc{btn}}^{(A_1A_2)}$ is a product state. We then arrive at
            \begin{equation}
            	\begin{split}
            		\br{x} R_A \kt{x} & = 
            		\sum_{ij} 
            		\Big(
            		\mean{P_i^\dagger P_j} \mean{Q_i^\dagger Q_j}
            		+
            		\mean{P_i^\dagger}\mean{P_j} \mean{Q_i^\dagger}\mean{Q_j}
            		- 
            		\mean{P_i^\dagger P_j}\mean{Q_i^\dagger}\mean{Q_j}
            		-
            		\mean{P_i^\dagger}\mean{P_j} \mean{Q_i^\dagger Q_j}
            		\Big)
            		\\
            		& = \tr\left(\big(\Gamma(P)\big)^T \Gamma(Q)\right),
                \end{split}
            \end{equation}
    	where
    	\begin{equation}
    		[\Gamma(P)]_{ij}= \mean{P_i^\dagger P_j}-\mean{P_i^\dagger}\mean{P_j}
    	\end{equation}
    	and similarly $\Gamma(Q)$ are \textsc{cm}s of the observables $\{P_i\}$ in the state $\varrho_{\textsc{btn}}^{A_1}$ and $\varrho_{\textsc{btn}}^{A_2}$, respectively. These matrices are positive semi-definite, 
    	so we have $\tr\left(\big(\Gamma(P)\big)^T \Gamma(Q)\right) \geq 0$, which finishes the proof.
    \end{proof}

\section[\appendixname~\thesection]{} \label{app:Rexplicit}
    In the main text, we write (see Eq.\@ \eqref{eq:Rexplicitelements})
    \begin{equation}
        \begin{split}
            R_X = & {\Gamma_X - \Gamma_{X_1} - \Gamma_{X_2}} \\
                = & \Gamma\left(\{G_\alpha\},\varrho_{\textsc{btn}}^{(X_1)}\right) \otimes \Gamma\left(\{G_\beta\},\varrho_{\textsc{btn}}^{(X_2)}\right), \quad X=A,B,C.
        \end{split}
    \end{equation}
    A proof is given by direct calculation.
    \begin{proof}[Proof of Eq.\@ \eqref{eq:Rexplicitelements}]
        We show the statement for $X=A$. The matrices $\Gamma_A$, $\Gamma_{A_1}$ and $\Gamma_{A_2}$ have respectively the following matrix elements  
        \begin{equation}
            [\Gamma_A]_{\alpha\beta \mid \alpha'\beta'} = \mean{(G_\alpha\otimes G_\beta)(G_{\alpha'}\otimes G_{\beta'})} - \mean{G_\alpha \otimes G_\beta} \mean{G_{\alpha'} \otimes G_{\beta'}},
        \end{equation}
        \begin{equation}
            [\Gamma_{A_1}]_{\alpha\beta \mid \alpha'\beta'} = \mean{G_\alpha G_{\alpha'}}\mean{G_\beta}\mean{G_{\beta'}}-\mean{G_\alpha}\mean{G_{\alpha'}}\mean{G_\beta}\mean{G_{\beta'}},
        \end{equation}
        \begin{equation}
            [\Gamma_{A_2}]_{\alpha\beta \mid \alpha'\beta'} = \mean{G_\alpha}\mean{G_{\alpha'}}\mean{G_\beta G_{\beta'}}-\mean{G_\alpha}\mean{G_{\alpha'}}\mean{G_\beta}\mean{G_{\beta'}},
        \end{equation}
        where the expectation values are taken on the state $\varrho_{\textsc{btn}}^{(A)}$, with identity operators padded where needed. So, the matrix elements of $R_A$ are
        \begin{equation}
            \begin{split}
                [R_A]_{\alpha\beta \mid \alpha'\beta'}=&\mean{(G_\alpha\otimes G_\beta)(G_{\alpha'}\otimes G_{\beta'})} - \mean{G_\alpha G_{\alpha'}}\mean{G_\beta}\mean{G_{\beta'}} -\mean{G_\alpha}\mean{G_{\alpha'}}\mean{G_\beta G_{\beta'}} \\
                &+ \mean{G_\alpha}\mean{G_{\alpha'}}\mean{G_\beta}\mean{G_{\beta'}} \\
                =&(\mean{G_\alpha G_{\alpha'}}-\mean{G_\alpha}\mean{G_{\alpha'}})(\mean{G_\beta G_{\beta'}}-\mean{G_\beta}\mean{G_{\beta'}})\\
                =& [\Gamma(\{G_\alpha\},\varrho_{\textsc{btn}}^{(A_1)}]_{ \alpha \alpha'}[\Gamma(\{G_\beta\},\varrho_{\textsc{btn}}^{(A_2)}]_{\beta\beta'}\\
                =&[\Gamma(\{G_\alpha\},\varrho_{\textsc{btn}}^{(A_1)})\otimes \Gamma(\{G_\beta\},\varrho_{\textsc{btn}}^{(A_2)})]_{\alpha\beta \mid \alpha'\beta'}\\
            \end{split}
        \end{equation}
        since $[A \otimes B]_{ij \mid i'j'} = A_{ii'} B_{jj'}$, and therefore
        \begin{equation}
            R_A = \Gamma(\{G_\alpha\},\varrho_{\textsc{btn}}^{(A_1)})\otimes \Gamma(\{G_\beta\},\varrho_{\textsc{btn}}^{(A_2)}).
        \end{equation}
    \end{proof}
    \begin{remark} \label{rem:cmforprodstaes}
    	We note that in general, for a product state $\varrho=\varrho_1\otimes\varrho_2$ and product observables $\{A_k \otimes B_l\}$, it holds that
        \begin{equation}
            \Gamma(\{A_k\otimes B_l\},\varrho)= \kb{\vec{a}}\otimes\Gamma(\{B_l\},\varrho_2) + \Gamma(\{A_k\},\varrho_1) \otimes \kb{\vec{b}} + \Gamma(\{A_k\},\varrho_1) \otimes\Gamma(\{B_l\},\varrho_2),
        \end{equation}
        where $\ket{\vec{a}}$ and $\kt{\vec{b}}$ are the vector with entries $\mean{A_k}_{\varrho_1}$ and $\mean{B_l}_{\varrho_2}$ respectively. In the case of complete sets of orthogonal observables,  $\ket{\vec{a}}$ and $\kt{\vec{b}}$ are the Bloch vectors of $\varrho_1$ and $\varrho_2$ respectively.
    \end{remark}

\section[\appendixname~\thesection]{} \label{app:gammaA2}
    In this appendix, we want to show that Eqs.\@ (\ref{eq:gammaA2}--\ref{eqEijdec}) hold, which we recall to be 
    \begin{equation} 
    	\Gamma_{A_2} = \kb{\vec{a}^{(1)}} \otimes \Gamma(\{G_\beta\},\varrho_{\textsc{btn}}^{(A_2)})
    \end{equation}
    and 
    \begin{equation}
        \gamma_E = \kt{\vec{a}^{(1)}}\br{\vec{b}^{(2)}} \otimes \gamma(\{G_\beta,G_\alpha\},\varrho_{\textsc{btn}}^{(A_2B_1)})
    \end{equation}
    respectively, with $\vec{a}^{(1)} \equiv (a_0^{(1)},\dots,a_{d^2-1}^{(1)})^T{\in \mathbb{R}^{d^2}}$ and similarly for $\vec{b}^{(2)}$.
    \begin{proof}[Proof of Eqs.\@ (\ref{eq:gammaA2}--\ref{eqEijdec})]
        A direct calculation shows that
        \begin{equation}
        	\begin{split}
        		[\Gamma_{A_2}]_{mn} 
                \equiv &  [\Gamma_{A_2}]_{\alpha\beta \mid \alpha'\beta'} 
                =  \mean{a_\alpha^{(1)}a_{\alpha'}^{(1)}G_\beta G_{\beta'}} - \mean{a_\alpha^{(1)}G_\beta}\mean{a_{\alpha'}^{(1)} G_{\beta'}} \\
        		= & a_\alpha^{(1)}a_{\alpha'}^{(1)} \left(\mean{G_\beta G_{\beta'}} - \mean{G_\beta}\mean{ G_{\beta'}} \right) \\
        		= & a_\alpha^{(1)}a_{\alpha'}^{(1)} [\Gamma(\{G_\beta\},\varrho_{\textsc{btn}}^{(A_2)})]_{\beta\beta'} \\
        		= & [\kb{\vec{a}^{(1)}} \otimes \Gamma(\{G_\beta\},\varrho_{\textsc{btn}}^{(A_2)})]_{\alpha\beta \mid \alpha'\beta'}
        	\end{split}
        \end{equation}
        and that
        \begin{equation}
        	\begin{split}
        		[\gamma_E]_{mn} \equiv [\gamma_E]_{\alpha\beta \mid \alpha'\beta'} = & a_\alpha^{(1)} b_\beta^{(2)} \left(\mean{G_\beta \otimes G_\alpha} - \mean{G_\beta } \mean{G_\alpha}\right) \\
        		= & [\kt{\vec{a}^{(1)}}\br{\vec{b}^{(2)}}]_{\alpha\beta} [\gamma(\{G_\beta,G_\alpha\},\varrho_{\textsc{btn}}^{(A_2B_1)})]_{\alpha'\beta'} \\
        		= & [\kt{\vec{a}^{(1)}}\br{\vec{b}^{(2)}} \otimes \gamma(\{G_\beta,G_\alpha\},\varrho_{\textsc{btn}}^{(A_2B_1)})]_{\alpha\beta \mid \alpha'\beta'}.
        	\end{split}
        \end{equation}    
    \end{proof}

\section[\appendixname~\thesection]{} \label{app:prop7}
    Let us first recall notations from the main text. We have a $N$-node \textsc{ncds} network with a set of sources $\mathds{S}$. The number of sources is given by $\abs{\mathds{S}}$, and each source $s\in \mathds{S}$ is the set of nodes the source connects where the nodes themselves are labelled by $x \in \{1,\dots,N\}$. The sources states are denotes $\varrho_s$, $s \in \mathds{S}$. Each party $x$ obtains $n_x$ qudits from $n_x$ different sources and any two distinct parties share at most one source.
    
    Let $\Gamma_\textsc{ncds}$ be the \textsc{cm} of a global state of such a network with observables $\{A_{x \mid i} : x = 1 , \dots, N\}$, where $A_{x \mid i}$ is the $i$th observables that only acts on the node $x$. We show in this appendix that
    \begin{customthm}{7}[Block decomposition for \textsc{cm}s of \textsc{ncds} network states]
        There exist matrices $\Upsilon_x^s$ ($s \in \mathds{S}$, $x \in s$) such that $\Gamma_\textsc{ncds}$ can be decomposed as a sum of $\abs{\mathds{S}}$ positive semi-definite block matrices $T_s$ ($s \in \mathds{S}$) where the off-diagonal blocks of each $T_s$ are $\gamma_{xy}$ for $\{x,y\} \subset s$ and $0$ for $\{x,y\} \not\subset s$, and where the diagonal blocks are $\Upsilon_x^s$.
    \end{customthm}
    As mentioned in the main text, we first prove the proposition for basic networks with no common double source (\textsc{bncds} networks). The extension to \textsc{ncds} networks without local operations follows using similar tricks to the triangle network scenario. 

    To do so, we extend Remark \ref{rem:cmforprodstaes} to $N$ parties
    \begin{lemma} \label{lemma}
        Let $\varrho = \varrho_1 \otimes \dots \otimes \varrho_N$ be a product state and $\{A_{i_1}^{(1)} \otimes \dots \otimes A_{i_N}^{(N)}\}$ be a set of product observables. The covariance matrix reads
        \begin{equation}
            \Gamma\left( \{A_{i_1}^{(1)} \otimes \dots \otimes A_{i_N}^{(N)}\}, \varrho \right) = \bigotimes_{\alpha = 1}^N \left( \kb{\vec{a}_\alpha} + \Gamma(\{A_{i_\alpha}^{(\alpha)}, \varrho_\alpha) \right) - \bigotimes_{\alpha = 1}^N \kb{\vec{a}_\alpha},
        \end{equation}
        where $\kt{\vec{a}_\alpha}$ is the vector with entries $\mean{A_{i_\alpha}^{(\alpha)}}_{\varrho_\alpha}$.
    \end{lemma}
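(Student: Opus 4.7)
The plan is to prove the identity entry-wise, using multi-indices $(i_1,\dots,i_N)$ to label the observables, and exploiting the fact that on a product state all expectation values of product observables factorise. I would write the general matrix element of the left-hand side as
\begin{equation*}
    [\Gamma]_{(i_1\dots i_N),(j_1\dots j_N)} = \prod_{\alpha=1}^N \mean{A_{i_\alpha}^{(\alpha)} A_{j_\alpha}^{(\alpha)}}_{\varrho_\alpha} - \prod_{\alpha=1}^N \mean{A_{i_\alpha}^{(\alpha)}}_{\varrho_\alpha}\mean{A_{j_\alpha}^{(\alpha)}}_{\varrho_\alpha},
\end{equation*}
which is immediate from $\tr[(A_{i_1}^{(1)}\otimes\cdots\otimes A_{i_N}^{(N)})\varrho]=\prod_\alpha \mean{A_{i_\alpha}^{(\alpha)}}_{\varrho_\alpha}$.

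Next, I would use the elementary rewriting $\mean{A_{i_\alpha}^{(\alpha)} A_{j_\alpha}^{(\alpha)}}_{\varrho_\alpha} = [\Gamma_\alpha]_{i_\alpha j_\alpha} + [\kb{\vec a_\alpha}]_{i_\alpha j_\alpha}$, where I abbreviate $\Gamma_\alpha \equiv \Gamma(\{A_{i_\alpha}^{(\alpha)}\},\varrho_\alpha)$ and note that $[\kb{\vec a_\alpha}]_{i_\alpha j_\alpha} = \mean{A_{i_\alpha}^{(\alpha)}}_{\varrho_\alpha}\mean{A_{j_\alpha}^{(\alpha)}}_{\varrho_\alpha}$. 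Substituting this into the first product turns it into
\begin{equation*}
    \prod_{\alpha=1}^N\bigl([\Gamma_\alpha]_{i_\alpha j_\alpha} + [\kb{\vec a_\alpha}]_{i_\alpha j_\alpha}\bigr),
\end{equation*}
and the second product is simply $\prod_\alpha [\kb{\vec a_\alpha}]_{i_\alpha j_\alpha}$.

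The final step is to recognise that for any matrices $X_\alpha$ of matching sizes one has the Kronecker-product identity $[X_1\otimes\cdots\otimes X_N]_{(i_1\dots i_N),(j_1\dots j_N)} = \prod_\alpha [X_\alpha]_{i_\alpha j_\alpha}$. Applying this to $X_\alpha = \kb{\vec a_\alpha}+\Gamma_\alpha$ in the first product and $X_\alpha = \kb{\vec a_\alpha}$ in the second gives exactly the entry of the right-hand side of the claimed identity, completing the proof.

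The proof is essentially bookkeeping; the only mild obstacle is keeping multi-indices straight and being careful that the argument really uses both the product structure of the state (so that all expectation values factorise) and the product structure of the observables (so that the Kronecker identity applies block-by-block). No positivity argument is needed here; the statement is a purely algebraic decomposition that will serve as a building block in the subsequent induction over sources when extending Proposition~1 to \textsc{bncds} and \textsc{ncds} networks.
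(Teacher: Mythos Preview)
Your proof is correct and follows essentially the same approach as the paper's: both compute the general matrix element of the left-hand side using the factorisation of expectation values on the product state, identify $\mean{A_{i_\alpha}^{(\alpha)} A_{j_\alpha}^{(\alpha)}}_{\varrho_\alpha}$ as the $(i_\alpha,j_\alpha)$ entry of $\kb{\vec a_\alpha}+\Gamma_\alpha$, and then invoke the entry-wise Kronecker-product identity to match the right-hand side. Your write-up is in fact a bit more explicit about the Kronecker step than the paper's, but the substance is identical.
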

    \begin{proof}
        The \textsc{cm} has matrix elements 
        \begin{equation} \label{eq:a34}
            \begin{split}
                &[  \Gamma( \{A_{i_1}^{(1)} \otimes \dots \otimes A_{i_N}^{(N)}\}, \varrho) ] _{i_1\dots i_n \mid j_1\dots j_n} \\
                &= \mean{(A_{i_1}^{(1)} \otimes \dots \otimes A_{i_N}^{(N)})(A_{j_1}^{(1)} \otimes \dots \otimes A_{j_N}^{(N)})}_\varrho - \mean{A_{i_1}^{(1)} \otimes \dots \otimes A_{i_N}^{(N)}}_\varrho  \mean{A_{j_1}^{(1)} \otimes \dots \otimes A_{j_N}^{(N)}}_\varrho \\
                &= \prod_{\alpha=1}^N \mean{A_{i_\alpha}A_{j_\alpha}}_{\varrho_\alpha} - \prod_{\alpha=1}^N \mean{A_{i_\alpha}}_{\varrho_\alpha} \mean{A_{j_\alpha}}_{\varrho_\alpha} 
            \end{split}
        \end{equation}
        and 
        \begin{equation}
            \Omega =  \Gamma\left( \{A_{i_1}^{(1)} \otimes \dots \otimes A_{i_N}^{(N)}\}, \varrho \right) = \bigotimes_{\alpha = 1}^N \left( \kb{\vec{a}_\alpha} + \Gamma(\{A_{i_\alpha}^{(\alpha)}, \varrho_\alpha) \right) - \bigotimes_{\alpha = 1}^N \kb{\vec{a}_\alpha}
        \end{equation}
        has matrix elements 
        \begin{equation}
            \begin{split}
                \Omega_{i_1\dots i_n \mid j_1\dots j_n} = \prod_{\alpha = 1}^N \mean{A_{i_\alpha}}_{\varrho_\alpha} \mean{A_{j_\alpha}}_{\varrho_\alpha} +\mean{A_{i_\alpha}A_{j_\alpha}}_{\varrho_\alpha}  - \mean{A_{i_\alpha}}_{\varrho_\alpha} \mean{A_{j_\alpha}}_{\varrho_\alpha} - \prod_{\alpha=1}^N \mean{A_{i_\alpha}}_{\varrho_\alpha} \mean{A_{j_\alpha}}_{\varrho_\alpha},
            \end{split}
        \end{equation}
        which is exactly Eq.\@ \eqref{eq:a34}.
    \end{proof}
    
    We are now ready to prove the block decomposition of a \textsc{cm} of a \textsc{bncds} state with product observables, that is, we furthermore require that the observables are of the form $A_{x  \mid i} = A_{x^1 \mid i} \otimes \dots \otimes A_{x^{n_x} \mid i}$, with $A_{x^1 \mid i}$ acting on the first qudit of the party $x$, labelled $x^1$, and similarly for the others.
    \begin{lemma} \label{lemma:a2}
        Let $\varrho$ be a \textsc{bncds} network state. Let $A_{x  \mid i} = A_{x^1 \mid i} \otimes \dots \otimes A_{x^{n_x} \mid i}$, with $A_{x^1 \mid i}$ acting on the first qudit of the party $x$, labelled $x^1$, and similarly for the others. Then
        \begin{equation}
            \Gamma\left(\{A_{x \mid i}:x=1,\dots,N\}, \varrho \right) = \sum_{s \in \mathds{S}} \Gamma\left( \{A_{x^\alpha \mid i}^\textsc{red}: x^\alpha \in s\}, \varrho_s \right) + \bigoplus_{x=1}^N R_x,
        \end{equation}
        where $R_x$ are \textsc{psd} matrices and 
        \begin{equation}
        \begin{split}
            A^\textsc{red}_{x^\alpha \mid i_\alpha} = \left( \prod_{\beta \neq \alpha, \beta = 1}^{n_x} \mean{A_{x^\beta \mid i_\beta}}_{\varrho^{(x^\beta)}} \right) A_{x^\alpha \mid i_\alpha}.
        \end{split}
        \end{equation}
        We note that the matrices $\Gamma\left( \{A_{x^\alpha \mid i}^\textsc{red}: x^\alpha \in s\}, \varrho_s \right)$ are padded with blocks of zeros where needed, such that they are partitioned in $N \times N$ blocks with the $i$th diagonal block corresponding to the $i$th party.
    \end{lemma}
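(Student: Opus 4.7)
The plan is to prove Lemma \ref{lemma:a2} by matching both sides block-by-block, exploiting the product structure of a \textsc{bncds} state and of the observables, and invoking Lemma \ref{lemma} on each diagonal block. The key structural facts are that $\varrho = \bigotimes_{s \in \mathds{S}} \varrho_s$ and that each party's local state factorises further as $\varrho^{(x)} = \bigotimes_{\alpha=1}^{n_x} \varrho^{(x^\alpha)}$, since the $n_x$ qudits of party $x$ come from $n_x$ distinct sources.

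For the off-diagonal blocks $\gamma_{xy}$: if parties $x$ and $y$ share no source, then $\varrho^{(xy)} = \varrho^{(x)} \otimes \varrho^{(y)}$, and a direct computation with product observables gives $\gamma_{xy} = 0$, matching the desired decomposition. If $x$ and $y$ share their (unique, by \textsc{ncds}) source $s$, with respective qudits $x^\alpha$ and $y^\beta$, then expanding the expectation values using the product observables factors out the spectator coefficients $\prod_{\gamma \neq \alpha} \mean{A_{x^\gamma \mid i_\gamma}}$ and $\prod_{\delta \neq \beta} \mean{A_{y^\delta \mid j_\delta}}$, leaving a genuine covariance term between $A_{x^\alpha \mid i_\alpha}$ and $A_{y^\beta \mid j_\beta}$ on $\varrho_s$. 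By the definition of $A^\textsc{red}$, this is exactly the corresponding off-diagonal block of $\Gamma(\{A^\textsc{red}_{x^\alpha \mid i}: x^\alpha \in s\}, \varrho_s)$.

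For the diagonal blocks $\Gamma_x$: applying Lemma \ref{lemma} (with $N$ replaced by $n_x$) to $\varrho^{(x)}$ and the product observables yields an expansion as a sum over non-empty subsets $S \subseteq \{1, \dots, n_x\}$, in which the $\alpha \in S$ slots carry the single-qudit covariance matrices $\Gamma(\{A_{x^\alpha \mid i_\alpha}\}, \varrho^{(x^\alpha)})$ and the $\beta \notin S$ slots carry $\kb{\vec{a}_\beta}$. A short computation shows that the singleton terms $S = \{\alpha\}$ coincide, one-to-one over $\alpha = 1, \dots, n_x$, with the diagonal block of $\Gamma(\{A^\textsc{red}_{x^\alpha \mid i}\}, \varrho_s)$ corresponding to the source $s$ feeding qudit $x^\alpha$; summing these reproduces exactly the diagonal contribution at $x$ from the per-source summands on the right-hand side. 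I then define $R_x$ as the remaining sum, restricted to subsets $S$ with $\abs{S} \geq 2$.

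Positivity of $R_x$ is immediate: each summand is a Kronecker product of matrices that are either covariance matrices (hence \textsc{psd}) or rank-one projectors $\kb{\vec{a}_\beta}$, so each summand is \textsc{psd}, and a sum of \textsc{psd} matrices is \textsc{psd}. The main obstacle is purely notational bookkeeping: keeping the multi-index orderings consistent between the Kronecker expansion of $\Gamma_x$ and the blocks of $\Gamma(\{A^\textsc{red}\}, \varrho_s)$, and consistently padding the per-source covariance matrices with zero blocks so that they live in the correct $N \times N$ block layout of $\Gamma_\textsc{ncds}$.
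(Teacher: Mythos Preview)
Your proposal is correct and follows essentially the same route as the paper: both analyse the off-diagonal blocks by factoring out spectator expectation values (yielding the reduced observables on the unique shared source), and both expand each diagonal block $\Gamma_x$ via Lemma~\ref{lemma} and identify the singleton terms with the per-source diagonal contributions, leaving the higher-order terms as $R_x$. Your explicit argument that $R_x$ is \textsc{psd} as a sum of Kronecker products of \textsc{psd} factors is in fact a small improvement, since the paper simply asserts the positive semi-definiteness of $R_x$ without spelling this out.
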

    \begin{proof}
        From the fact that each subset of observables $\{A_{x \mid i}\}$ only acts on one party of the network, it directly follows that $\Gamma_\textsc{ncds}$ has a block structure, 
        \begin{equation}
            \Gamma_\textsc{ncds} = \begin{pmatrix}
                \Gamma_1 & \gamma_{12} & \dots & \gamma_{1N} \\
                \gamma_{12}^T & \Gamma_{2} & \dots & \gamma_{2N} \\
                \vdots & \vdots & \ddots & \vdots \\
                \gamma_{1N}^T & \gamma_{2N}^T & \dots & \Gamma_{N} \\
            \end{pmatrix}.
        \end{equation}

        Let us investigate the structure of $\Gamma_x$ ($x\in\{1,\dots,N\}$) for a \textsc{bncds} network state $\varrho_\textsc{bncds}$. We recall that 
        \begin{equation}
            \Gamma_x = \Gamma \left( \{A_{x^1 \mid i} \otimes \dots \otimes A_{x^{n_x} \mid i}\}_i, \varrho^{(x)}_\textsc{bncds}  \right)
        \end{equation}
        where $\varrho_\textsc{bncds}^{(x)} = \tr_{\hat{x}} ( \varrho_\textsc{bncds})$, $\hat{x} =\{1,\dots,N\} \setminus \{x\}$. For the sake of readability, we will drop the subscript \textsc{bncds} until the end of the proof. As $\varrho^{(x)}$ is a product state, $\Gamma_x$ can be decomposed following Lemma \ref{lemma}, i.e.,
        \begin{equation}
            \Gamma_x = \bigotimes_{\alpha = 1}^N \left( \kb{\vec{x}_\alpha} + \Gamma(\{A_{x^\alpha \mid i_\alpha}\}, \varrho^{(x^\alpha)}) \right) - \bigotimes_{\alpha = 1}^N \kb{\vec{x}_\alpha},
        \end{equation}
        with $\mean{A_{x^\alpha \mid i_\alpha}}_{\varrho^{(x^\alpha)}}$ being the vector elements of $\kt{\vec{x}_\alpha}$. Therein, the summands
        \begin{equation}
            \Gamma(\{A_{x^\alpha \mid i_\alpha}\}, \varrho^{(x^\alpha)}) \bigotimes_{\beta \neq \alpha, \beta = 1}^{n_x} \kb{\vec{x}_\beta}
        \end{equation}
        can be written as
        \begin{equation}
            \Gamma\left( \{A_{x^\alpha\mid i_\alpha}^\textsc{red}\}, \varrho^{(x^\alpha)} \right),
        \end{equation}
        with 
        \begin{equation} \label{eq:redobsncds}
            A^\textsc{red}_{x^\alpha \mid i_\alpha} = \left( \prod_{\beta \neq \alpha, \beta = 1}^{n_x} \mean{A_{x^\beta \mid i_\beta}}_{\varrho^{(x^\beta)}} \right) A_{x^\alpha \mid i_\alpha}.
        \end{equation}

        Now, we analyse the off-diagonal blocks have matrix elements 
        \begin{equation} \label{eq:gammaoffdiag}
            \left[ \gamma_{xy} \right]_{ij} = \mean{A_{x\mid i} \otimes A_{y \mid j}}_{\varrho^{xy}} - \mean{A_{x\mid i}}_{\varrho^{x}} \mean{A_{y\mid j}}_{\varrho^{y}}.
        \end{equation}
        They are trivially equal to zero when the nodes $x$ and $y$ are not connected as in that case, $\varrho^{(xy)} = \varrho^{(x)} \otimes \varrho^{(y)} $. On the other hand, if they do are connected, it is by one source exactly. Without loss of generality, we assume that $x^1$ and $y^1$ are connected by the same source, and the state can be written as
        \begin{equation}
            \varrho^{(xy)} = \varrho^{(x^1 y^1)} \bigotimes_{\alpha=2}^{{n_x}} \varrho^{(x^\alpha)}  \bigotimes_{\beta=2}^{{n_y}} \varrho^{(y^\beta)}.
        \end{equation}
        Therefore, Eq.\@ \eqref{eq:gammaoffdiag} reads
        \begin{equation} \label{eq:offdiagredobs}
            \left[ \gamma_{xy} \right]_{ij} = \left( \mean{A_{x^1 \mid i_1} \otimes A_{y^1 \mid j_1}}_{\varrho^{(x^1 y^1)}} -  \mean{A_{x^1 \mid i_1}}_{\varrho^{(x^1)}}  \mean{A_{y^1 \mid i_1}}_{\varrho^{(y^1)}} \right)  \prod_{\alpha=2}^{n_x} \mean{A_{x^\alpha \mid i_\alpha}}_{\varrho^{(x^\alpha)}} \prod_{\beta=2}^{n_y} \mean{A_{y^\beta \mid i_\beta}}_{\varrho^{(y^\beta)}} ,
        \end{equation}
        which, with the reduced observables of Eq.\@ \eqref{eq:redobsncds} can be written as
        \begin{equation}
            \left[ \gamma_{xy} \right]_{ij} = \mean{A^\textsc{red}_{x^1 \mid i_1} \otimes A^\textsc{red}_{y^1 \mid j_1}}_{\varrho^{(x^1 y^1)}} -  \mean{A^\textsc{red}_{x^1 \mid i_1}}_{\varrho^{(x^1)}}  \mean{A^\textsc{red}_{y^1 \mid i_1}}_{\varrho^{(y^1)}}  .
        \end{equation}

        Finally, putting everything together, we obtain 
        \begin{equation}
            \Gamma\left(\{A_{x \mid i}:x=1,\dots,N\}, \varrho \right) = \sum_{s \in \mathds{S}} \Gamma\left( \{A_{x^\alpha \mid i}^\textsc{red}: x^\alpha \in s\}, \varrho_s \right) + \bigoplus_{x=1}^N R_x,
        \end{equation}
        where 
        \begin{equation}
        \begin{split}
            R_x =& \bigotimes_{\alpha = 1}^{n_x} \left( \kb{\vec{x}_\alpha} + \Gamma(\{A_{x^\alpha \mid i_\alpha}\}, \varrho_\alpha) \right) - \bigotimes_{\alpha = 1}^{n_x} \kb{\vec{x}_\alpha} - \sum_{\alpha = 1}^{n_x} \left( \Gamma(\{A_{x^\alpha \mid i_\alpha}\}, \varrho_\alpha) \bigotimes_{\beta \neq \alpha, \beta = 1}^{n_x} \kb{\vec{x}_\beta}\right) 
        \end{split}
        \end{equation}
        is positive semi-definite.
    \end{proof}

    Now that we have the explicit structure of \textsc{cm}s for product observables on \textsc{bncds} network states, it directly follows that in this case, the \textsc{cm}s have a block decomposition as described in Proposition \ref{prop:BlockDecNCDS}. We use the following lemma to argue that the block decomposition holds for any set of local observables:
    \begin{lemma} \label{lemma:a3}
        Let $\Gamma \left( \{N_i\}_{i=1}^n\}, \varrho \right)$ be a \textsc{cm}. Let $C$ be a real matrix such that $M_j = \sum_{i=1}^n C_{ij} N_i$, $j=1,\dots , m$. Then 
        \begin{equation}
            \Gamma \left( \{M_j\}_{j=1}^m, \varrho\right) = C^T \Gamma \left( \{N_i\}_{i=1}^n, \varrho\right) C.
        \end{equation}
    \end{lemma}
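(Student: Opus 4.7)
The plan is to verify the claim by a direct computation of matrix elements, exploiting only the linearity of the trace. The lemma is the standard statement that, under a real linear change of observables, a covariance matrix transforms by congruence with the transformation matrix, analogously to how a Gram matrix transforms under a change of basis.

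Concretely, I would start from the defining matrix element
\begin{equation}
    [\Gamma(\{M_j\}, \varrho)]_{jk} = \mean{M_j M_k}_\varrho - \mean{M_j}_\varrho \mean{M_k}_\varrho,
\end{equation}
substitute $M_j = \sum_i C_{ij} N_i$ and $M_k = \sum_l C_{lk} N_l$, and pull the real coefficients $C_{ij}$ and $C_{lk}$ out of the expectation brackets using linearity of $X \mapsto \tr(X\varrho)$. After collecting, the right-hand side reads
\begin{equation}
    \sum_{i,l} C_{ij} \bigl( \mean{N_i N_l}_\varrho - \mean{N_i}_\varrho \mean{N_l}_\varrho \bigr) C_{lk} = \sum_{i,l} C_{ij} [\Gamma(\{N_i\}, \varrho)]_{il} C_{lk},
\end{equation}
which is precisely the $(j,k)$-entry of $C^T \Gamma(\{N_i\}, \varrho) C$. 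Since this holds for arbitrary indices, the matrix identity follows.

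I do not expect any genuine obstacle: this is a routine verification. The one point worth emphasising is that the assumption that $C$ is real is needed so that no complex conjugation appears and the transformation is by $C^T$ rather than $C^\dagger$. This is also what makes the resulting $M_j$ Hermitian whenever the $N_i$ are, so that $\Gamma(\{M_j\},\varrho)$ is a bona fide covariance matrix of Hermitian observables. As a sanity check, one can note that the map $(N_i) \mapsto (M_j)$ extends the congruence to the level of Gram-like bilinear forms, and that positive semi-definiteness is preserved: if $\Gamma(\{N_i\},\varrho) \succeq 0$, then $C^T \Gamma(\{N_i\},\varrho) C \succeq 0$, matching the general fact that covariance matrices are always \textsc{psd}. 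This is exactly the property that the proof of Proposition \ref{prop:BlockDecNCDS} requires when the original lemma, proved for product observables, is promoted to arbitrary local observables by writing each $A_{x\mid i}$ as a real linear combination of products of basis operators.
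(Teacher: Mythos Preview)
Your proposal is correct and follows essentially the same approach as the paper: a direct computation of matrix elements, substituting the linear combination and pulling the real coefficients $C_{ij}$ out of the expectation values by linearity. The paper's proof is the same one-line calculation (with slightly sloppier index bookkeeping); your additional remarks on why reality of $C$ matters and on preservation of positive semi-definiteness are sound and relevant to how the lemma is subsequently used.
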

    \begin{proof}
        A direct calculation gives 
        \begin{equation}
            \begin{split}
                \left[\Gamma \left( \{M_j\}_{j=1}^m, \varrho\right) \right]_{kl} =& \sum_{i,j=1}^n \mean{C_{ik}A_i C_{jl}A_j}_\varrho - \mean{C_{ik}A_i}_\varrho \mean{C_{jl}A_j}_\varrho \\
                =& \sum_{i,j} C_{ki}^T \left[\Gamma \left( \{N_i\}_{i=1}^n, \varrho\right)\right]_{ij} C_{jl},
            \end{split}
        \end{equation}
    which proves the claim.
    \end{proof}

    Combining all those results, we are now ready to prove Proposition \ref{prop:BlockDecNCDS}.
    \begin{proof}[Proof of Proposition \ref{prop:BlockDecNCDS}.]
        From Lemma \ref{lemma:a2}, we know that the block decomposition holds for \textsc{bncds} network states with product observables. When those product observables are chosen to be a complete set of observables, Lemma \ref{lemma:a3} shows that the block decomposition holds for any observable set acting on \textsc{bncds} network states. Finally, an analogous reasoning to the cases of \textsc{utn} and \textsc{ctn} leads to the conclusion that the block decomposition holds for states of \textsc{ncds} networks with local operations. 
    \end{proof} 

\end{document}